\definecolor{darkred}  {rgb}{0.5,0,0}
\definecolor{darkblue} {rgb}{0,0,0.5}
\definecolor{darkgreen}{rgb}{0,0.5,0}
\theoremstyle{definition}
\newtheorem{thm}{Theorem}
\definecolor{cool_green}{rgb}{0.0, 0.5, 0.0}
\newcommand{\yk}[1]{{#1}}
\begin{document}

\title{Limitations of Gaussian measurements in quantum imaging}
\author{Yunkai Wang}
\email{ywang10@perimeterinstitute.ca}
\affiliation{Perimeter Institute for Theoretical Physics, Waterloo, Ontario N2L 2Y5, Canada.}
\affiliation{Department of Applied Mathematics, University of Waterloo, Ontario N2L 3G1, Canada.}
\affiliation{Institute for Quantum Computing, University of Waterloo, Ontario N2L 3G1, Canada.}

\author{Sisi Zhou}
\affiliation{Perimeter Institute for Theoretical Physics, Waterloo, Ontario N2L 2Y5, Canada.}
\affiliation{Department of Physics and Astronomy, University of Waterloo, Ontario N2L 3G1, Canada.}
\affiliation{Institute for Quantum Computing, University of Waterloo, Ontario N2L 3G1, Canada.}
\affiliation{Department of Applied Mathematics, University of Waterloo, Ontario N2L 3G1, Canada.}

\begin{abstract}
Imaging thermal sources naturally yields Gaussian states at the receiver, raising the question of whether Gaussian measurements can perform optimally in quantum imaging. \yk{In this work, we establish no-go theorems on the performance of Gaussian measurements for imaging thermal sources in the limit of mean photon number per temporal mode $\epsilon \to 0$ or source size $L \to 0$. We show that non-Gaussian measurements can outperform any Gaussian measurement in the scaling of the estimation variance with $\epsilon$ (or $L$).} We also present several examples to illustrate the no-go results. 
\end{abstract}

\maketitle
\textit{Introduction} - \yk{Imaging thermal sources is a crucial technique across a range of fields. With the rising interest in applying quantum technologies to imaging, increasing attention has been devoted to understanding the fundamental resolution limits in areas such as astronomy and microscopy \cite{tsang2016quantum,tsang2019resolving,zanforlin2022optical,parniak2018beating,tham2017beating,paur2016achieving,yu2018quantum,napoli2019towards,zhou2019modern,tsang2017subdiffraction,tsang2019quantum,wang2023fundamental,nair2016far,lupo2016ultimate,wang2021superresolution,xie2024far,yang2017fisher,dong2020superresolution,yang2016far,darji2024robust,bhusal2022smart,backlund2018fundamental,mitchell2024quantum,taylor2016quantum,defienne2024advances,taylor2014subdiffraction,taylor2013biological,casacio2021quantum,cameron2024adaptive,ndagano2022quantum,tenne2019super,he2023quantum,classen2017superresolution,jin2018nanoparticles,huang2009super,picariello2025quantum,yue2025quantum,kudyshev2023machine,gatto2014beating,rust2006sub,betzig2006imaging,gustafsson2000surpassing,hell1994breaking} and the potential benefits of leveraging quantum networks for interferometric imaging \cite{gottesman2012longer,khabiboulline2019quantum,khabiboulline2019optical,huang2022imaging,marchese2023large,czupryniak2022quantum,czupryniak2023optimal,wang2023astronomical,purvis2024practical,huang2024limited}.}
Notably, several studies have employed the formalism of Gaussian quantum information to analyze states emitted by thermal sources, rather than focusing solely on individual photons. For example, imaging resolution has been extensively analyzed for thermal sources of arbitrary strength \cite{nair2016far,lupo2016ultimate,wang2021superresolution,xie2024far,yang2017fisher,dong2020superresolution,yang2016far}. \yk{Heterodyne or homodyne detection in specific spatial modes has been investigated as a means to achieve superresolution, although such approaches do not succeed in attaining it \cite{xie2024far,yang2017fisher,dong2020superresolution,yang2016far}.} Meanwhile, interferometric imaging assisted by continuous-variable quantum networks has been explored, modeling stellar light as Gaussian states \cite{wang2023astronomical,purvis2024practical,huang2024limited}, revealing that homodyne detection, even with distributed entanglement, does not clearly outperform local measurement schemes \cite{purvis2024practical}.  Despite these advances in specific scenarios, a general theorem that unifies these discussions and clarifies the role of Gaussian measurements in imaging remains lacking. This work aims to fill this gap by establishing a no-go theorem.

We show that when imaging a weak thermal source with the mean photon number per temporal mode $\epsilon \ll 1$ in interferometric imaging, the Fisher information matrix (FIM) $F$ for estimating unknown parameters that inversely bounds the estimation variance \cite{kay1993fundamentals}, satisfies $\|F\|= NO(\epsilon^2)$ using any Gaussian measurement,  where $\|F\|$ is the largest eigenvalue of $F$ (spectrum norm),  $N$ is the number of copies of the measured state.   In contrast, non-Gaussian measurements can achieve $\|F\|= N\Theta(\epsilon)$~\footnote{In this work, \yk{$O(.)$ denotes asymptotic upper bound. $g(n)=O(f(n))$ if $\exists\,C>0,\,n_0$ such that $|g(n)|\le C|f(n)|$ for $n\ge n_0$. $\Theta(.)$ denotes asymptotically tight bound. $g(n)=\Theta(f(n))$ if $\exists\,C_{1,2}>0,\,n_0$ such that $C_1|f(n)|\le |g(n)|\le C_2|f(n)|$ for $n\ge n_0$.}}. 
Interestingly, this performance gap mirrors the gap between local and nonlocal measurements in interferometric imaging, where any local measurement is limited to $\|F\| =N O(\epsilon^2)$, while nonlocal measurements can achieve $\|F\| = N\Theta(\epsilon)$, as shown in Ref.~\cite{tsang2011quantum}. 
This stark difference between local and nonlocal measurements has motivated extensive discussions on interferometric imaging assisted by quantum networks \cite{gottesman2012longer,khabiboulline2019quantum,khabiboulline2019optical,huang2022imaging,marchese2023large,czupryniak2022quantum,czupryniak2023optimal,wang2023astronomical,purvis2024practical,huang2024limited}.
Our work complements Ref.~\cite{tsang2011quantum} by demonstrating that nonlocality alone is insufficient to achieve the improved scaling of $\|F\| = N\Theta(\epsilon)$. We establish that non-Gaussianity is also a necessary condition for this enhanced performance. Specifically, even nonlocal Gaussian measurements are fundamentally constrained to at most $\|F\| = NO(\epsilon^2)$.
Moreover, our results extend to general parameter estimation in single-lens imaging of weak thermal sources, which measures the light field formed on the detection plane by a single lens, where the concept of locality is not applicable.
We prove that the same significant performance gap between Gaussian and non-Gaussian measurements persists.

\yk{We also investigate the superresolution problem \cite{tsang2016quantum,tsang2019resolving,zanforlin2022optical,parniak2018beating,tham2017beating,paur2016achieving,yu2018quantum,napoli2019towards,zhou2019modern,tsang2017subdiffraction,tsang2019quantum,wang2023fundamental,nair2016far,lupo2016ultimate,wang2021superresolution,xie2024far,yang2017fisher,dong2020superresolution,yang2016far,darji2024robust,bhusal2022smart}, which concerns imaging a source of size $L$ far below the Rayleigh limit. Previous work has shown that a suitably designed non-Gaussian measurement—specifically, photon counting in Hermite–Gaussian spatial modes—can achieve a FIM $F$ with more favorable scaling in $L$ than direct imaging \cite{tsang2016quantum,zhou2019modern,tsang2017subdiffraction,tsang2019quantum}. This enhanced scaling enables significantly improved performance in the sub-Rayleigh regime, a phenomenon known as superresolution. In contrast, we prove that any Gaussian measurement cannot surpass direct imaging in terms of the FIM scaling with $L$ in this regime. Our results thus establish a new no-go theorem, providing fundamental insights for the design of quantum imaging protocols.

}

From a broader perspective, the role of Gaussian and non-Gaussian operations has been extensively studied in quantum information theory. In the context of quantum computing with Gaussian states, it has been shown that non-Gaussian measurements are essential for achieving universal quantum computation \cite{lloyd1999quantum,bartlett2002efficient,weedbrook2012gaussian}. Non-Gaussianity has also been recognized as a valuable resource in quantum resource theory \cite{chitambar2019quantum} and has been shown to be essential for continuous-variable entanglement distillation \cite{eisert2002distilling}, quantum error correction \cite{niset2009no}, and various other applications.
However, some quantum information tasks, such as quantum key distribution \cite{grosshans2002continuous,diamanti2015distributing,cerf2007quantum} and quantum teleportation \cite{braunstein1998teleportation}, can be implemented solely with Gaussian operations and states.
Whether Gaussian measurements are sufficient for quantum sensing tasks remains an open question. Our work introduces a new no-go theorem for Gaussian operations in the context of quantum imaging, demonstrating that non-Gaussianity is also a critical resource in quantum sensing. \yk{Specifically, the absence of non-Gaussianity can significantly degrade the performance of imaging weak thermal sources or superresolution techniques.}

\begin{figure}[!tb]
\begin{center}
\includegraphics[width=0.8\columnwidth]{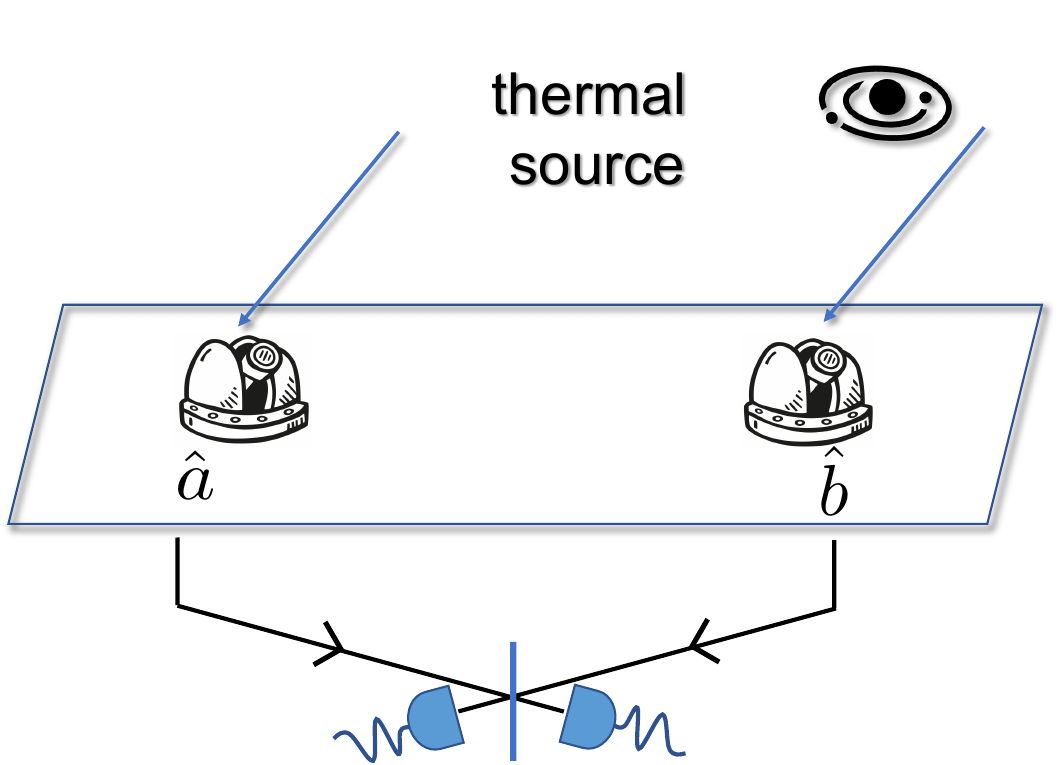}
\caption{Set up for interferometric imaging with two lenses corresponding to the two spatial modes $\hat{a},\hat{b}$.  }
\label{fig:inteferometer}
\end{center}
\end{figure}

\yk{\textit{Imaging weak thermal sources} - We begin by discussing the imaging of a weak thermal source in the simplest scenario: interferometric imaging with two lenses, as shown in Fig.~\ref{fig:inteferometer}. This setup involves only two spatial modes, making the proof more straightforward than in the general case. We then extend the analysis to interferometric imaging with more lenses and to single-lens imaging.}
Interferometric imaging utilizes multiple lenses to function collectively as a larger effective lens, where the diameter of this synthetic lens is determined by the baseline between the smaller lenses  \cite{monnier2003optical}. According to the van Cittert-Zernike theorem \cite{zernike1938concept}, the mutual coherence function of light collected by different lenses corresponds to a Fourier component of the source’s intensity distribution, with its spatial frequency determined by the baseline.
The two-mode weak thermal state received by the two lenses can be described using the Sudarshan-Glauber P representation \cite{mandel1995optical}
\begin{equation}\begin{aligned}\label{rho_thermal}
&\rho=\int\frac{d^2\alpha d^2\beta}{\pi^2\det\Gamma}\exp(-\vec{\gamma}^\dagger \Gamma^{-1}\vec{\gamma})\ket{\vec{\gamma}}\bra{\vec{\gamma}},\\
&\vec{\gamma}=[\alpha,\beta]^T,\quad \Gamma=  \frac{\epsilon}{2}\left[
\begin{matrix}
1 & g\\
g^* & 1
\end{matrix}\right],\\
&\ket{\vec{\gamma}}=\exp(\alpha \hat{a}^\dagger-\alpha^*\hat{a})\exp(\beta \hat{b}^\dagger-\beta^*\hat{b})\ket{0},
\end{aligned}\end{equation}
where $g = |g|e^{i\theta}$ represents the coherence function, and $\epsilon$ denotes the mean photon number per temporal mode, which is assumed to be much less than one ($\epsilon \ll 1$). The operators $\hat{a}$ and $\hat{b}$ are the annihilation operators for the two modes. \yk{As a Gaussian state whose Wigner function has a Gaussian form,} $\rho$ is fully characterized by its displacement $\mu_i=\langle \hat{z}_i\rangle$ and covariance matrix $V_{ij}=\frac{1}{2}\langle \{\hat{z}_i-\mu_i,\hat{z}_j-\mu_j\}\rangle$ \cite{weedbrook2012gaussian}, where $\hat{\vec{z}}=[\hat{x}_1,\hat{p}_1,\hat{x}_2,\hat{p}_2]$, $\hat{a}=(\hat{x}_1+i\hat{p}_1)/\sqrt{2}$, $\hat{b}=(\hat{x}_2+i\hat{p}_2)/\sqrt{2}$, $\langle\hat{O}\rangle=\tr(\rho\hat{O})$, $\{\hat{O}_1,\hat{O}_2\}=\hat{O}_1\hat{O}_2+\hat{O}_2\hat{O}_1$. The displacement vanishes, and the covariance matrix is given by
\begin{equation}
V_\rho=\frac{1}{2}\left[
\begin{matrix}
1+\epsilon & 0 & \epsilon |g|\cos\theta & -\epsilon|g|\sin\theta\\    
0 & 1+\epsilon & \epsilon|g|\sin\theta & \epsilon|g|\cos\theta\\
\epsilon|g|\cos\theta & \epsilon|g|\sin\theta & 1+\epsilon & 0\\
-\epsilon|g|\sin\theta & \epsilon|g|\cos\theta & 0 & 1+\epsilon
\end{matrix}\right].
\end{equation}
\yk{We compare the performance of Gaussian and non-Gaussian measurements on this state for imaging. A measurement is  called Gaussian if, when applied to any Gaussian state, it yields outcomes whose probability distribution is also Gaussian   \cite{weedbrook2012gaussian}. Experimentally, any Gaussian measurement can be implemented using homodyne detection, beam splitters, squeezers, displacements, and ancilla modes prepared in Gaussian states. Measurements not realizable
in this way are non-Gaussian. If we perform a non-Gaussian, nonlocal measurement—photon number detection at the two output ports of Fig.~\ref{fig:inteferometer}—the FIM for estimating the unknown parameters $|g|$ and $\theta$ achieves $\|F\| = N\Theta(\epsilon)$ \cite{tsang2011quantum}.} The FIM lower bounds the covariance matrix of estimating a set of parameters $\vec{x}$ through the Cramér-Rao bound, which states that $\text{Cov}(\hat{\vec{x}}) \geq F^{-1}$ (i.e., $\text{Cov}(\hat{\vec{x}}) - F^{-1}$ is positive semidefinite) for any unbiased estimator $\hat{\vec{x}}$, and the bound is asymptotically saturable by the maximum likelihood estimator under proper regularity conditions \cite{kay1993fundamentals}. \yk{We now establish our first theorem which upper bounds the FIM for any Gaussian measurement.} 


\begin{thm}\label{thm:two_lens}
For interferometric imaging with two lenses that    receive $N$ copies of states in the form given in Eq.~\ref{rho_thermal}, each element of the FIM for estimating the unknown parameters $\theta$ and $|g|$ using any Gaussian measurement is upper bounded by
\begin{equation}\begin{aligned}
&F_{|g||g|}\leq 2\epsilon^2N, \\  
&F_{\theta\theta}\leq 2\epsilon^2|g|^2N, \\  
&F_{\theta|g|}\leq 2\epsilon^2|g|N. \\  
\end{aligned}
\end{equation}
\end{thm}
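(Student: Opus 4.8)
The plan is to reduce any Gaussian measurement to its \emph{general-dyne} (generalized homodyne) form and exploit the fact that the state $\rho$ in Eq.~\eqref{rho_thermal} has vanishing, parameter-independent displacement, so that all Fisher information is carried by the covariance matrix alone. \textbf{Step 1 (outcome statistics).} First I would recall that any Gaussian measurement on the $2N$ modes carrying $\rho^{\otimes N}$---possibly after adjoining ancilla modes prepared in a Gaussian state and applying a Gaussian unitary---is equivalent to a POVM with elements proportional to displaced Gaussian states, $\Pi(\vec r)\propto \hat D(\vec r)\,\hat\sigma\,\hat D^\dagger(\vec r)$, where $\hat\sigma$ is a fixed Gaussian ``seed'' state with covariance matrix $\Gamma_m\succeq 0$ independent of $\theta,|g|$ (homodyne is the singular limit of $\Gamma_m$). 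For a zero-mean Gaussian input with covariance $V_\rho^{\oplus N}$ the outcome density is then the zero-mean Gaussian
\begin{equation}
p(\vec r\,)=\mathcal N\!\big(\vec r\,;\,0,\ \Sigma\big),\qquad \Sigma=V_\rho^{\oplus N}+\Gamma_m .
\end{equation}
The crucial point is that the mean stays fixed at $0$ for all parameter values, so the measurement carries information \emph{only} through the parameter dependence of $\Sigma$.

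\textbf{Step 2 (Gaussian Fisher information).} For a zero-mean Gaussian family the classical FIM reduces to the covariance-only term
\begin{equation}
F_{ij}=\tfrac12\,\tr\!\big[\Sigma^{-1}(\partial_i\Sigma)\,\Sigma^{-1}(\partial_j\Sigma)\big],\qquad i,j\in\{\theta,|g|\}.
\end{equation}
Since $\Gamma_m$ is parameter-independent, $\partial_i\Sigma=(\partial_i V_\rho)^{\oplus N}=:A_i^{\oplus N}$, and differentiating $V_\rho$ gives $\tr[A_{|g|}^2]=\epsilon^2$ and $\tr[A_\theta^2]=\epsilon^2|g|^2$, because the parameter-dependent block of $V_\rho$ is $\tfrac{\epsilon|g|}{2}$ times a rotation matrix, whose square is a multiple of the identity.

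\textbf{Step 3 (two inequalities).} I would then combine two elementary bounds. (i) From $|g|\le1$ the eigenvalues of $V_\rho$ are $\tfrac12(1+\epsilon)\pm\tfrac12\epsilon|g|\ge\tfrac12$, so $\Sigma\succeq V_\rho^{\oplus N}\succeq\tfrac12 I$ and hence $\|\Sigma^{-1}\|\le 2$. (ii) For symmetric $A$ and $M=\Sigma^{-1}\succeq0$, writing the trace in the eigenbasis of $M$ yields $\tr[(MA)^2]=\sum_{k,l}m_k m_l A_{kl}^2\le \|M\|^2\sum_{k,l}A_{kl}^2=\|M\|^2\,\tr[A^2]$. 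Together with additivity over copies, $\tr[(A_i^{\oplus N})^2]=N\,\tr[A_i^2]$, these give
\begin{equation}
F_{ii}\le \tfrac12\,\|\Sigma^{-1}\|^2\,N\,\tr[A_i^2]\le 2N\,\tr[A_i^2],
\end{equation}
which is exactly $F_{|g||g|}\le2\epsilon^2N$ and $F_{\theta\theta}\le2\epsilon^2|g|^2N$. The off-diagonal bound then follows because $F\succeq0$ forces $F_{\theta|g|}\le\sqrt{F_{\theta\theta}F_{|g||g|}}\le2\epsilon^2|g|N$.

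The routine part is the linear algebra of Step~3; the real content is Step~1---justifying that \emph{every} Gaussian measurement, including nonlocal ones and those using ancillae or acting jointly on all $N$ copies, reduces to a general-dyne POVM producing a zero-mean Gaussian outcome whose only parameter dependence is through $\Sigma=V_\rho^{\oplus N}+\Gamma_m$. This is precisely what forces the quadratic suppression: the informative quantity $\partial_i V_\rho$ is $O(\epsilon)$ but enters the Gaussian FIM quadratically, yielding $O(\epsilon^2)$, whereas a photon-counting (non-Gaussian) measurement can instead extract the linear-in-$\epsilon$ signal hidden in the photon-number statistics.
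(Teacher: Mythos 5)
Your proof is correct and reaches exactly the paper's constants, and its overall skeleton matches the paper's: represent any (possibly joint, ancilla-assisted) Gaussian measurement as a displaced-Gaussian POVM, note that the outcome statistics are then zero-mean Gaussian with covariance $\Sigma = V_\rho^{\oplus N} + V_\Pi$ whose only parameter dependence sits in $V_\rho$, apply the Gaussian-FIM formula, and obtain the off-diagonal entry from $F \succeq 0$ via Cauchy--Schwarz. Where you genuinely diverge is the linear algebra that extracts the factor $\epsilon^2$. The paper constructs explicit orthogonal matrices $U_1$ (for $|g|$) and, in its appendix, $U_2$ (for $\theta$) that simultaneously diagonalize $V_\rho$ and the corresponding derivative matrix; it then pulls the derivative's eigenvalues (all of modulus $\epsilon/2$, resp.\ $\epsilon|g|/2$) out of the trace to get $F_{ii} \leq \tfrac{\epsilon^2}{8}\tr[(\Sigma^N)^{-2}]$, and finally uses positivity of $V_\Pi$ to bound $\tr[(\Sigma^N)^{-2}] \leq \tr[(\Sigma_\rho^N)^{-2}] \leq 16N$. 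You split the same trace in the dual way, $\tr[(\Sigma^{-1}A_i^{\oplus N})^2] \leq \|\Sigma^{-1}\|^2\, N\,\tr[A_i^2]$, bounding the \emph{state} factor in operator norm ($\|\Sigma^{-1}\| \leq 2$ because $V_\rho \succeq \tfrac12 I$, using that inversion reverses the semidefinite order) and the \emph{derivative} factor in Frobenius norm ($\tr[A_i^2] = \epsilon^2$ or $\epsilon^2|g|^2$, immediate since $A_i^2 \propto I_4$). This buys two small advantages: no parameter-specific rotations are needed, so both diagonal FIM entries follow from a single argument, and the monotonicity fact you invoke is more elementary than the monotonicity of $\tr[(\cdot)^2]$ under the semidefinite order that the paper's second inequality implicitly requires. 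Incidentally, your norm-times-rank style of bound is closer in spirit to what the paper itself switches to in its appendix when generalizing to $M$ lenses and to single-lens imaging, where explicit simultaneous diagonalization is no longer convenient.
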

\begin{proof}
Any Gaussian measurement can be written as the form \cite{adesso2014continuous,weedbrook2012gaussian}
\begin{equation}\label{Pi_y}
\Pi_{\vec{y}}=\frac{1}{\pi^2}D_{\vec{y} }\Pi_0D^\dagger_{\vec{y}},
\end{equation}
where $\Pi_0$ is the density matrix of a general Gaussian state with vanishing displacement and covariance matrix $V_\Pi$. Notably, the outcome label $\vec{y}$ is solely determined by the displacement of $\Pi_{\vec{y}}$. As demonstrated in Sec.~\ref{appendix:Gaussian_measurement} of the Supplemental Material, the probability distribution is
\begin{equation}\begin{aligned}
&P(\vec{y}|g)=\frac{1}{(2\pi)^2\sqrt{\det V}}\exp[-\frac{1}{2}\vec{y}^TV^{-1}\vec{y}],
\end{aligned}\end{equation}
where $V = V_\Pi + V_\rho$. The FIM $F$ for estimating the unknown parameters $\vec{x}$ can then be computed from a Gaussian probability distribution  $\vec{y} \sim \mathcal{N}(\vec{0}, C(\vec{x}))$ as \cite{kay1993fundamentals}
\begin{equation}
\begin{aligned}
[F]_{ij}&=\frac{1}{2}\tr[C^{-1}(\vec{x})\frac{\partial C(\vec{x})}{\partial x_i}C^{-1}(\vec{x})\frac{\partial C(\vec{x})}{\partial x_j}].
\end{aligned}
\end{equation}
We can then calculate the FIM of estimating $|g|,\theta$.
Define
\begin{equation}
U_1=\frac{1}{\sqrt{2}}\left[
\begin{matrix}
\sin\theta & -\cos\theta & 0 & 1\\    
-\cos\theta & -\sin\theta & 1 & 0\\
-\sin\theta & \cos\theta & 0 & 1\\
\cos\theta & \sin\theta & 1 & 0
\end{matrix}\right].
\end{equation}
We can define $\Sigma=U_1VU_1^\dagger$ and 
\begin{equation}
\begin{aligned}
&\Sigma_\rho=U_1V_\rho U_1^\dagger=\frac{1}{2}\text{diag}[a,a,b,b],\\
&\Sigma_{\partial|g|}=U_1\frac{\partial V}{\partial |g|} U_1^\dagger=\frac{1}{2}\text{diag}[-\epsilon,-\epsilon,\epsilon,\epsilon],
\end{aligned}
\end{equation}
where $a=1+\epsilon-\epsilon|g|,b=1+\epsilon+\epsilon|g|$. 
If we consider $N$ copies of the state $\rho^{\otimes N}$, its covariance matrix is given by $V_\rho^N = I_N \otimes V$, \yk{where $I_N$ is the $N$-dimensional identity matrix}, but we allow $V_\Pi^N$ to be general, rather than having this tensor product structure. 
We also define $V^N = V_\rho^N + V_\Pi^N$ and $\Sigma^N = (I_N \otimes U_1)V^N(I_N \otimes U_1)^\dagger$, with similar definitions for $\Sigma^N_\Pi$ and $\Sigma^N_\rho$.
We can find the FIM element of estimating $|g|$ 
\begin{equation}
\begin{aligned}
F_{|g||g|}&=\frac{1}{2}\tr((\Sigma^N)^{-1}\Sigma^N_{\partial|g|}(\Sigma^N)^{-1}\Sigma^N_{\partial|g|})\leq \frac{\epsilon^2}{8}\tr((\Sigma^N)^{-2})\\
&\leq  \frac{\epsilon^2}{8}\tr((\Sigma_\rho^N)^{-2})\leq2\epsilon^2 N,
\end{aligned}\end{equation}
where we take the absolute value of the eigenvalues of $\Sigma^N_{\partial|g|}$ in the first inequality because, using the spectral decomposition $\Sigma^N_{\partial|g|} = \sum_i \lambda_i \ket{v_i} \bra{v_i}$, we obtain $F_{|g||g|}=\frac{1}{2}\sum_{i,j}\lambda_i\lambda_j|\bra{v_i}(\Sigma^N)^{-1}\ket{v_j}|^2\leq \frac{1}{2}\sum_{i,j}|\lambda_i\lambda_j||\bra{v_i}(\Sigma^N)^{-1}\ket{v_j}|^2$. In the second inequality, we use the fact that $\Sigma_\Pi^N$ is positive semidefinite.  Similarly, we can establish the upper bound for $F_{\theta\theta}$ and $F_{\theta|g|}$. 
\end{proof}
\yk{We have thus shown that any Gaussian measurements always yield a FIM scaling of at most $NO(\epsilon^2)$. Notably, the Gaussian measurements considered in the above proof also include nonlocal Gaussian measurements.
Ref.~\cite{tsang2011quantum} demonstrates a scaling difference between the nonlocal and local measurement. Theorem~\ref{thm:two_lens} extends this analysis by revealing the same scaling gap between the Gaussian and non-Gaussian measurement.} Consequently, our findings clearly quantify the limitations of relying solely on Gaussian measurements. Note that in the above proof, we allow the Gaussian measurement to be performed jointly on the $N$ copies of the state $\rho^{\otimes N}$. This implies that even a joint Gaussian measurement on multiple copies of the state still achieves only the $NO(\epsilon^2)$ scaling. \yk{Moreover, while we have bounded each element of the FIM individually, we can also obtain a straightforward upper bound on the entire matrix using the matrix inequality $F \leq 2N\epsilon^2(1+|g|^2) I_2$, where $I_2$ is the $2$-dimensional identity matrix.  Having established the case of interferometric imaging with two lenses, we also extend our analysis to the general scenario with multiple lenses. Our results show that any Gaussian measurement still remains limited by the same performance bound, $\|F\| = NO(\epsilon^2)$. 
} 


\yk{We now present several examples that illustrate the above no-go theorem.}
First, we consider the case where the light received by two lenses is directly combined on a beam splitter, followed by homodyne or heterodyne detection at the two output ports, as illustrated in Fig.~\ref{fig:inteferometer}. 
\yk{We explicitly calculate the FIM and show that it scales as $\|F\|= N\Theta(\epsilon^2) $ across different scenarios of homodyne and heterodyne detection, confirming the general theorem stated above.  }
Furthermore, interferometric imaging based on continuous-variable quantum teleportation can achieve FIM scaling of $\|F\| = N\Theta(\epsilon)$ with non-Gaussian photon counting detection \cite{wang2023astronomical}, whereas it is limited to $\|F\| = NO(\epsilon^2)$ if only homodyne detection is used~\cite{purvis2024practical}.  These discussions also serve as specific examples that align with our general no-go results, highlighting that achieving the scaling $\|F\|=N\Theta(\epsilon)$ with distributed entanglement still requires non-Gaussian measurement.

\yk{For single-lens imaging as shown in Fig.~\ref{fig:single lens}, where the distinction between nonlocal and local measurements in interferometric imaging, as discussed in Ref.~\cite{tsang2011quantum}, does not apply, the difference between Gaussian and non-Gaussian measurements remains relevant.
Interestingly, for single-lens imaging, we observe the same scaling difference, with non-Gaussian measurement achieving $\|F\| = N\Theta(\epsilon)$ and Gaussian measurement limited to $\|F\| = NO(\epsilon^2)$ for the estimation of an set of unknown parameters  encoded in the intensity distribution of the source. 
Note that the definition of Gaussian measurements is independent of the detector's spatial mode profile;  a detector with a Gaussian beam shape can still perform a non-Gaussian measurement if it involves, for example, photon number detection.
For detailed proofs of the no-go theorems for imaging weak thermal sources with Gaussian measurements in both interferometric and single-lens imaging, as well as additional details of the examples, we refer the reader to Sec.~\ref{appendix:imaging_weak_source} of the Supplemental Material.} 

\begin{figure}[!tb]
\begin{center}
\includegraphics[width=1\columnwidth]{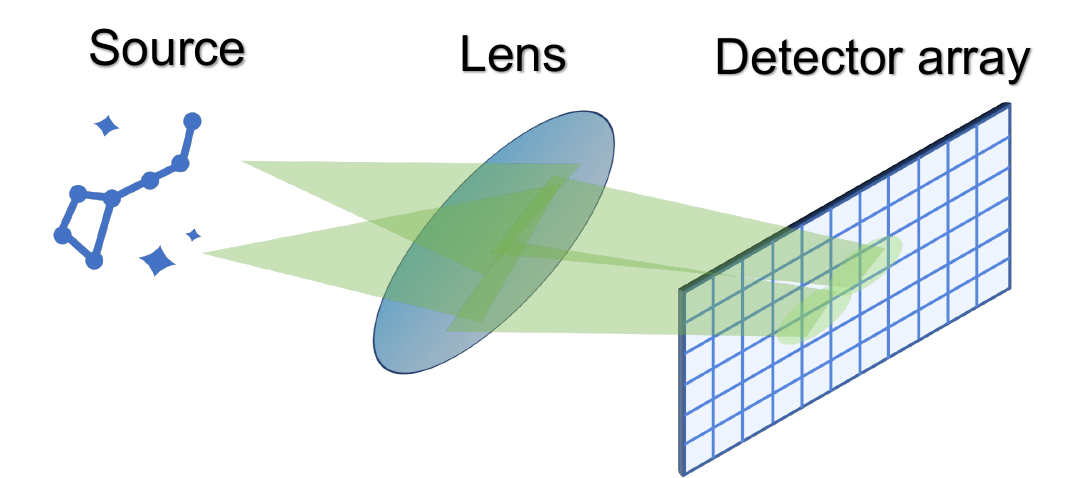}
\caption{Set up for single lens imaging.}
\label{fig:single lens}
\end{center}
\end{figure}

\textit{Superresolution} - \yk{We now explore the superresolution problem  as studied extensively in Ref.~\cite{tsang2016quantum,tsang2019resolving,zanforlin2022optical,parniak2018beating,tham2017beating,paur2016achieving,yu2018quantum,napoli2019towards,zhou2019modern,tsang2017subdiffraction,tsang2019quantum,wang2023fundamental,nair2016far,lupo2016ultimate,wang2021superresolution,xie2024far,yang2017fisher,dong2020superresolution,yang2016far,darji2024robust}. For the single-lens imaging,  the  thermal state received on the detection plane from a general incoherent source after a single lens has the form 
\begin{equation}\begin{aligned}\label{eq:rho_single_lens}
&\rho_W = \int \frac{d^{2W}\vec{\gamma}}{\pi^W \det \Gamma}\,\exp\Bigl[-\vec{\gamma}^\dagger \Gamma^{-1}\vec{\gamma}\Bigr] \ket{\vec{\gamma}}\bra{\vec{\gamma}}, \\ 
&\ket{\vec{\gamma}}=\exp(\sum_i\gamma_{x_i}c_{x_i}^\dagger-\gamma_{x_i}^*c_{x_i})\ket{0},\\
&\vec{\gamma}=[\gamma_{x_1},\gamma_{x_2},\cdots,\gamma_{x_W}]^T,\\
&\Gamma=\sum_{i=1}^Q\epsilon \zeta_i\psi_i\psi_i^\dagger,\quad \sum_{i=1}^Q \zeta_i=1,
\end{aligned}
\end{equation}
where we consider $W$ points on the detection plane and $Q$ points on the source plane, $c_x$ is the annihilation operator at position $x$ on the detection plane, $\psi_i=[\psi(x_1-y_i),\psi(x_2-y_i),\cdots,\psi(x_W-y_i)]^T$, with $x_i$ denoting the position of the $i$th point on the detection plane, $y_i$ denoting the position of the $i$th point on the source plane, and we take $W,Q\to\infty$ in our derivation, $\psi(x)$ is the point spread function (PSF) and can take any form, $\zeta_i$ is the normalized intensity of the $i$th point on the source plane.
The total intensity is given by $\tr(\Gamma)=\epsilon$.
All points on the source plane with nonvanishing intensity $\zeta_i\neq0$ have positions confined to $y_i \in [y_0 - L/2,, y_0 + L/2]$, where $y_0$ is the source centroid and the source size $L \to 0$.

Previous work   has shown that for estimating the moments defined as $t_n=\sum_{i=1}^Q \zeta_i\left(\frac{y_i-y_0}{L}\right)^n$, $n=0,1,2,\cdots$,  which form a complete set of parameters characterizing the intensity distribution $\zeta_i$, a superresolution scheme based on non-Gaussian measurements can achieve FIM scaling as $F_{t_i t_j}=\Theta(L^{i+j-2\lfloor \min{i,j}/2 \rfloor})$ \cite{zhou2019modern,tsang2017subdiffraction,tsang2019quantum}. In contrast, for direct imaging the FIM scales as $F_{t_i t_j}=\Theta(L^{i+j})$. In the following, we show that when restricted to Gaussian measurements, the FIM is always bounded by $F_{t_i t_j}=O(L^{i+j})$, implying that one can never surpass direct imaging in terms of scaling with $L$.



\begin{thm}\label{thm:superreoslution}
Consider imaging thermal sources of size $L \to 0$ with a single lens that receives a state of the form given in Eq.~\ref{eq:rho_single_lens}. For estimating the moments $\{t_n\}_{n=0,1,2,\cdots}$ by performing Gaussian measurements on $N$ copies of the state, the FIM is bounded by
\begin{equation}
F_{t_nt_m} = N \epsilon^2 O( L^{n+m}).
\end{equation}

\end{thm}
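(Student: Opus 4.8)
The plan is to reduce the general off-diagonal bound to a diagonal one and then reuse the machinery of Theorem~\ref{thm:two_lens}. First I would invoke exactly the same Gaussian-measurement setup: applied to $N$ copies of the Gaussian state $\rho_W$, any Gaussian measurement yields a zero-mean Gaussian outcome distribution with covariance $V^N = V_\rho^N + V_\Pi^N$, where $V_\rho^N = I_N \otimes V_\rho$ is fixed by the state and $V_\Pi^N \ge 0$ is an arbitrary measurement covariance, so only $V_\rho^N$ depends on the parameters. The FIM is then $F_{t_nt_m} = \tfrac12\tr[(V^N)^{-1}(\partial_{t_n}V^N)(V^N)^{-1}(\partial_{t_m}V^N)]$. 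Introducing $B_n = (V^N)^{-1/2}(\partial_{t_n}V^N)(V^N)^{-1/2}$, the Cauchy--Schwarz inequality for the Hilbert--Schmidt inner product gives $|F_{t_nt_m}| = \tfrac12|\tr[B_nB_m]| \le \sqrt{F_{t_nt_n}}\sqrt{F_{t_mt_m}}$, since $\tr[B_n^2]=2F_{t_nt_n}$. Hence it suffices to prove the diagonal bound $F_{t_nt_n}=N\epsilon^2 O(L^{2n})$, after which the stated bound $F_{t_nt_m}=N\epsilon^2 O(L^{n+m})$ follows immediately.

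The crucial structural input is the scaling of $\partial_{t_n}V_\rho$ in $L$. Writing $u_i=(y_i-y_0)/L$ and Taylor expanding the PSF about the centroid, $\psi(x-y_i)=\sum_{k\ge0}\frac{(-Lu_i)^k}{k!}\psi^{(k)}(x-y_0)$, the $P$-representation covariance becomes $\Gamma=\epsilon\sum_{k,l\ge0}\frac{(-L)^{k+l}}{k!\,l!}\,t_{k+l}\,\phi_k\phi_l^\dagger$, where $\phi_k=[\psi^{(k)}(x_1-y_0),\dots,\psi^{(k)}(x_W-y_0)]^T$ and I used $\sum_i\zeta_iu_i^{k+l}=t_{k+l}$. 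Treating the moments as independent coordinates, only the terms with $k+l=n$ survive, so $\partial_{t_n}\Gamma=\epsilon(-L)^n\sum_{k=0}^n\frac{1}{k!(n-k)!}\phi_k\phi_{n-k}^\dagger$. Two facts then drop out: through the standard linear correspondence between $\Gamma$ and the quadrature covariance, $\partial_{t_n}V_\rho$ has operator norm $O(\epsilon L^n)$, and it has rank $R_n\le 2(n+1)$, bounded independently of $W$ and $N$, being supported on $\mathrm{span}\{\phi_0,\dots,\phi_n\}$.

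With these facts the diagonal bound follows the template of Theorem~\ref{thm:two_lens}, the only change being that the low rank of $\partial_{t_n}V^N=I_N\otimes\partial_{t_n}V_\rho$ replaces the full-rank $\Sigma^N_{\partial|g|}$; this is precisely what produces the correct $N$ scaling, since $\tr P = NR_n$ for the projector $P=I_N\otimes P_n$ onto the support. Spectrally decomposing $\partial_{t_n}V^N$ and keeping its nonzero eigenvalues (each of magnitude $c_n\epsilon L^n$ for a finite constant $c_n$), the manipulation used for $F_{|g||g|}$ gives $F_{t_nt_n}\le \tfrac12 (c_n\epsilon L^n)^2\,\tr[P(V^N)^{-2}]$. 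Using the physical floor $V_\rho\ge\tfrac12 I$ (in the Loewner order), hence $\|(V^N)^{-1}\|\le\|V_\rho^{-1}\|\le 2$, together with $\tr[P(V^N)^{-2}]\le\|(V^N)^{-1}\|^2\tr P=O(NR_n)$, yields $F_{t_nt_n}=N\epsilon^2 O(L^{2n})$, and the Cauchy--Schwarz step above closes the argument.

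The hard part will be making the continuum limits $W,Q\to\infty$ rigorous while keeping the two key estimates uniform. Specifically, one must check that after the map from the complex $P$-covariance $\Gamma$ to the real symmetric quadrature matrix $V_\rho$, the derivative $\partial_{t_n}V_\rho$ genuinely inherits both the $O(\epsilon L^n)$ operator norm and the $W$-independent finite rank of $\partial_{t_n}\Gamma$, and that the constants $c_n$ (equivalently the PSF integrals controlling $\|\phi_k\|$ and the norm of $\sum_{k}\frac{1}{k!(n-k)!}\phi_k\phi_{n-k}^\dagger$) stay finite in the continuum. Everything else is the eigenvalue bookkeeping already rehearsed in Theorem~\ref{thm:two_lens}.
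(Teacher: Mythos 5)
Your proposal is correct and follows essentially the same route as the paper's proof: the same Taylor expansion of the PSF about the centroid giving $\partial_{t_n}\Gamma$ as a rank-$O(n)$ matrix of operator norm $O(\epsilon L^n)$, the same projector-plus-eigenvalue bookkeeping with $\lVert (V^N)^{-1}\rVert \le 2$ to get $F_{t_nt_n}=N\epsilon^2 O(L^{2n})$, and the same Cauchy--Schwarz (equivalently, positive semidefiniteness of the FIM) step for the off-diagonal elements. The only differences are cosmetic (order of steps, $\tr[P(V^N)^{-2}]$ in place of $\tr[(P(V^N)^{-1}P)^2]$), and the continuum-limit concern you flag is handled in the paper simply by noting that $\lVert \omega^{(n)}\rVert^2$ reduces to finite PSF integrals.
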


}



\yk{
Note that Theorem~\ref{thm:superreoslution} applies to a thermal source with any $\epsilon$, without requiring $\epsilon \to 0$, meaning that even for a strong thermal source, Gaussian measurement still cannot achieve superresolution. 
Furthermore, the FI is bounded by at most $NO(\epsilon^2)$ when using Gaussian measurements, in agreement with the no-go results in the weak thermal source limit. However, Theorem~\ref{thm:superreoslution} further reveals important scaling properties with respect to $L$, which are relevant to superresolution. Similar no-go results for the superresolution of resolving the distance between two point sources is also established and can actually be regarded as a special case of the above general theorem.  Note that superresolution methods typically involve measurements in specific spatial modes. For example, Ref.~\cite{tsang2016quantum} shows that photon counting in Hermite–Gaussian spatial modes can achieve superresolution for a Gaussian PSF. We emphasize that a measurement is classified as Gaussian or non-Gaussian based on whether the probability distribution resulting from measuring a Gaussian state is itself Gaussian, regardless of the spatial mode’s shape. For example, photon counting in Hermite–Gaussian spatial modes constitutes a non-Gaussian measurement.
Our general no-go theorem applies to any Gaussian measurements performed in any spatial modes and shows that such measurements still cannot achieve superresolution.

In deriving Theorem \ref{thm:superreoslution}, we do not make any explicit assumption about the shape of the PSF; therefore, the results hold for an arbitrary PSF. The shape of the PSF could affect the prefactor of $F_{t_nt_m}$ while scaling over $L$ remains the same. Note that the discussion based on FIM and Cramer-Rao bound provides the precision quantification only for the unbiased estimator, we avoid such a loophole following the approach in Ref.~\cite{tsang2018conservative}. We show in Sec.~\ref{Appendix:Unbiased estimator} of the Supplemental Material that, even when biased estimators are considered, Gaussian measurements still fail to achieve a substantial improvement over direct imaging in resolving the separation between two point sources.

Previous studies \cite{xie2024far,yang2017fisher,dong2020superresolution,yang2016far} have explored  superresolution through heterodyne or homodyne detection on specific spatial modes, such as transverse-electromagnetic modes, which are more practical to implement. These studies have shown that certain Gaussian measurement schemes can outperform direct imaging in some parameter regimes when the photon number per temporal mode is sufficiently large. 
However, the FIM obtained with Gaussian measurements still has the same scaling behavior as direct imaging, with any moderate improvement arising only from a different prefactor. These results indicate that superresolution cannot be achieved with the specific Gaussian measurements considered in their work.  But whether more sophisticated Gaussian measurements could achieve superresolution was an open question. 
Our results establish a general no-go theorem, definitively ruling out the possibility of any Gaussian measurement achieving superresolution in the imaging of  thermal sources. At the same time, our results place the performance of such schemes within the broader context of the fundamental limitations of Gaussian measurements in quantum imaging.  For comprehensive proofs of the no-go theorems for superresolution with Gaussian measurements—including both the two-point-source case and the general source case—as well as a review of earlier superresolution approaches and the no-go theorem for superresolution using interferometric imaging, see Sec.~\ref{Appendix:superresolution} of the Supplemental Material.

}

\textit{Conclusion and Discussion} - 
Our work establishes a new no-go result in quantum imaging, adding to the list of quantum tasks that cannot be achieved solely with Gaussian operations. We prove that the absence of non-Gaussianity fundamentally limits FIM in parameter estimation, leading to non-Gaussian measurements outperforming Gaussian ones by a factor of $\epsilon$ in estimation variance when imaging weak thermal sources with the mean photon number per temporal mode $\epsilon \rightarrow 0$. \yk{Interestingly,  the performance gap between Gaussian and non-Gaussian measurements identified by our results is as significant as the difference between nonlocal and local measurements in interferometric imaging, as discussed in Ref.~\cite{tsang2011quantum}, which has motivated the exploration of interferometric imaging based on quantum networks~\cite{gottesman2012longer,khabiboulline2019quantum,khabiboulline2019optical,huang2022imaging,marchese2023large,czupryniak2022quantum,czupryniak2023optimal,wang2023astronomical,purvis2024practical,huang2024limited}. And we also show that Gaussian measurements alone cannot achieve superresolution for any thermal sources with size $L\rightarrow0$.} Our no-go theorem unifies and clarifies previously scattered discussions on the performance of specific Gaussian and non-Gaussian measurements on Gaussian states in interferometric imaging~\cite{wang2023astronomical, purvis2024practical, huang2024limited} and single-lens imaging~\cite{nair2016far, lupo2016ultimate, wang2021superresolution, xie2024far, yang2017fisher, dong2020superresolution, yang2016far}, providing a powerful tool to understand the fundamental limitations of Gaussian measurements in quantum imaging. 



\yk{Our discussion focuses on imaging thermal sources, which are commonly encountered in nature and encompass several active topics in quantum imaging. However, active illumination represents an important complementary scenario, widely used in the rapidly developing field of quantum microscopy \cite{taylor2016quantum,defienne2024advances,taylor2014subdiffraction,taylor2013biological,casacio2021quantum,cameron2024adaptive,ndagano2022quantum,tenne2019super,he2023quantum,classen2017superresolution,jin2018nanoparticles,huang2009super,picariello2025quantum,yue2025quantum,kudyshev2023machine,gatto2014beating,rust2006sub,betzig2006imaging,gustafsson2000surpassing,hell1994breaking}. The key distinction is that active illumination can involve highly nonclassical states, such as squeezed light or two-photon entangled states. In contrast, our derivation assumes that the collected light is in a thermal state, which excludes such nonclassical illumination. Scenarios involving nonclassical active illumination therefore require separate analysis.  It would also be worthwhile to further investigate the imaging performance specifically under the restriction of Gaussian illumination and Gaussian measurements.}
Additionally, we can raise a broader question: what are the fundamental performance limits of using only Gaussian states and operations in various sensing tasks? For instance, it has been shown that homodyne detection can achieve Heisenberg scaling for parameter estimation in distributed sensing problems with Gaussian states as probes~\cite{zhuang2018distributed, wang2020continuous,oh2020optimal}. However, in some discussions, it has been observed that non-Gaussian measurements and states becomes necessary in the presence of loss for quantum sensing~\cite{oh2020optimal, gardner2024stochastic}. The limitations of Gaussian measurements in quantum sensing remain an open area of research.

\textit{Acknowledgements -} 
We thank Changhun Oh and Yujie Zhang for helpful discussion. Y.W. and S.Z. acknowledge funding provided by Perimeter Institute for Theoretical Physics, a research institute supported in part by the Government of Canada through the Department of Innovation, Science and Economic Development Canada and by the Province of Ontario through the Ministry of Colleges and Universities. Y.W. also acknowledges funding from the Canada First Research Excellence Fund.


\bibliography{arxiv}

\newpage

\appendix

\onecolumngrid

\section{Preliminary about the Gaussian state and measurement} \label{appendix:Gaussian_measurement}

In this section, we review  the formalism of Gaussian quantum information.
Any Gaussian measurement can be expressed in the form \cite{adesso2014continuous, weedbrook2012gaussian}
\begin{equation}\label{Pi_y}
\Pi_{\vec{y}}=\frac{1}{\pi^2}D_{\vec{y} }\Pi_0D^\dagger_{\vec{y}},
\end{equation}
where $\Pi_0$ is a Gaussian state with vanishing displacement and covariance matrix $V_\Pi$. The measurement outcome $\vec{y}$ is solely determined by the displacement of $\Pi_{\vec{y}}$. Our goal is to determine the probability distribution $P(\vec{y}|\vec{r}) = \tr(\Pi_{\vec{y}}\rho_{\vec{r}})$, where $\vec{r}$ represents the displacement of the state $\rho_{\vec{r}}$.
For two operators $A$ and $B$ with Wigner functions $W_A(\vec{q},\vec{p})$ and $W_B(\vec{q},\vec{p})$, their trace relation follows \cite{case2008wigner}:
$\tr[AB]\propto\int d\vec{q}d\vec{p}W_{A}(\vec{q},\vec{p})W_{B}(\vec{q},\vec{p})$. The Wigner functions for $\Pi_{\vec{y}}$ and $\rho_{\vec{r}}$ are given by
\begin{equation}
W_\Pi(q_1,p_1,q_2,p_2)=\frac{1}{\pi^2}\frac{\exp[-\frac{1}{2}(\vec{x}-\vec{y})^TV_\Pi^{-1}(\vec{x}-\vec{y})]}{(2\pi)^2\sqrt{\det V_{\Pi}}},
\end{equation}
\begin{equation}
W_\rho(q_1,p_1,q_2,p_2)=\frac{\exp[-\frac{1}{2}(\vec{x}-\vec{r})^TV_\rho^{-1}(\vec{x}-\vec{r})]}{(2\pi)^2\sqrt{\det V_\rho}},
\end{equation}
where $\vec{x}=[q_1,p_1,q_2,p_2]^T$, $\vec{y}=[y_1,y_2,y_3,y_4]^T$.
Since the integral is a standard Gaussian integral, we obtain
\begin{equation}\begin{aligned}
&P(\vec{y}|\vec{r})=\frac{1}{(2\pi)^2\sqrt{\det V}}\exp[-\frac{1}{2}(\vec{y}-\vec{r})^TV^{-1}(\vec{y}-\vec{r})],
\end{aligned}\end{equation}
where $V=V_\Pi+V_\rho$. 

For the Sudarshan-Glauber P representation of the form \cite{mandel1995optical}
\begin{equation}\begin{aligned}
&\rho_M = \int \frac{d^{2M}\vec{\gamma}}{\pi^M \det \Gamma}\,\exp\Bigl[-\vec{\gamma}^\dagger \Gamma^{-1}\vec{\gamma}\Bigr] \ket{\vec{\gamma}}\bra{\vec{\gamma}}, \\ 
\end{aligned}
\end{equation}
If we define the covariance matrix as $V_{ij}=\frac{1}{2}\langle\{\hat{z}_i,\hat{z}_j\}\rangle$, where $\{\hat{z}_i,\hat{z}_j\}=\hat{z}_i\hat{z}_j+\hat{z}_j\hat{z}_i$, $\langle \hat{O}\rangle=\tr(\rho_M\hat{O})$,  $\hat{\vec{z}}=[\hat{x}_1,\hat{x}_2,\cdots,\hat{x}_M,\hat{p}_1,\hat{p}_2,\cdots,\hat{p}_M]$, $\hat{a}_i=(\hat{x}_i+i\hat{p}_i)/\sqrt{2}$, the covariance matrix of $\rho_M$ is given by
\begin{equation}
V=\frac{1}{2}I_{2M}+\left[\begin{matrix}
\text{Re}\Gamma & -\text{Im}\Gamma \\
\text{Im}\Gamma & \text{Re}\Gamma 
\end{matrix}\right].
\end{equation}
Note that, for convenience, in the main text we order the quadrature operators as $\hat{\vec{z}} = [\hat{x}_1, \hat{p}_1, \hat{x}_2, \hat{p}_2, \dots, \hat{x}_M, \hat{p}_M]$, which results in a permutation of the elements of $V$.


\section{No-go theorem for imaging weak thermal sources}\label{appendix:imaging_weak_source}

\subsection{\yk{Proof of Theorem \ref{thm:two_lens} for interferometric imaging of weak thermal sources}}\label{appendix:proof thm interferometric}

In the main text, we have derived the upper bound for $F_{|g||g|}$ in the case of interferometric imaging with two lenses. We now proceed to evaluate $F_{\theta\theta}$ and $F_{|g|\theta}$.
\begin{equation}\label{eq:pVpg}
\frac{\partial V}{\partial |g|}=V_{\partial |g|}=\frac{\epsilon}{2}\left[
\begin{matrix}
0 & 0 & \cos\theta & -\sin\theta\\    
0 & 0 & \sin\theta & \cos\theta\\
\cos\theta & \sin\theta & 0 & 0\\
-\sin\theta & \cos\theta & 0 & 0
\end{matrix}\right],
\end{equation}
\begin{equation}
\frac{\partial V}{\partial \theta}=V_{\partial\theta}=\frac{\epsilon |g|}{2}\left[
\begin{matrix}
0 & 0 & -\sin\theta & -\cos\theta\\    
0 & 0 & \cos\theta & -\sin\theta\\
-\sin\theta & \cos\theta & 0 & 0\\
-\cos\theta & -\sin\theta & 0 & 0
\end{matrix}\right].
\end{equation}
Define
\begin{equation}
U_2=\frac{1}{\sqrt{2}}\left[
\begin{matrix}
\cos\theta & \sin\theta & 0 & 1\\    
\sin\theta & -\cos\theta & 1 & 0\\
-\cos\theta & -\sin\theta & 0 & 1\\
-\sin\theta & \cos\theta & 1 & 0
\end{matrix}\right].
\end{equation}
We can then define
\begin{equation}
\begin{aligned}
&\Sigma'^N=(I_N\otimes U_2)V^N(I_N\otimes U_2)^\dagger,\quad \Sigma'^N_\Pi=(I_N\otimes U_2)V^N_\Pi(I_N\otimes U_2)^\dagger,\\
&\Sigma_{\partial\theta}=U_2\frac{\partial V}{\partial \theta} U_2^\dagger=\frac{\epsilon|g|}{2}\text{diag}[-1,-1,1,1],\quad\Sigma^N_{\partial\theta}=I_N\otimes\Sigma_{\partial\theta}.
\end{aligned}
\end{equation}
We can then follow a similar proof approach
\begin{equation}\label{eq:F_thetatheta}
\begin{aligned}
F_{\theta\theta}&=\frac{1}{2}\tr((\Sigma'^N)^{-1}\Sigma^N_{\partial\theta}(\Sigma'^N)^{-1}\Sigma^N_{\partial\theta})\\
&\leq \frac{\epsilon^2|g|^2}{8}\tr((\Sigma'^N)^{-2})\\
&\leq  \frac{\epsilon^2|g|^2}{8}\tr((\Sigma_\rho^N)^{-2})=2\epsilon^2 |g|^2N,
\end{aligned}\end{equation}
where, in the first inequality, we take the absolute values of the eigenvalues of $\Sigma^N_{\partial\theta}$ because, using the spectral decomposition $\Sigma^N_{\partial\theta} = \sum_i \lambda_i \ket{v_i} \bra{v_i}$, we obtain $F_{\theta\theta}=\frac{1}{2}\sum_{i,j}\lambda_i\lambda_j|\bra{v_i}(\Sigma'^N)^{-1}\ket{v_j}|^2\leq \frac{1}{2}\sum_{i,j}|\lambda_i\lambda_j||\bra{v_i}(\Sigma'^N)^{-1}\ket{v_j}|^2$. The second inequality follows from the fact that $\Sigma'_\Pi$ is positive semidefinite.

For the off-diagonal elements of the FIM $F_{|g|\theta}$, since the FIM $F$ is positive semidefinite, it follows from the properties of positive semidefinite matrices that we have
\begin{equation}\begin{aligned}
&F_{|g|\theta}\leq \sqrt{F_{|g||g|}F_{\theta\theta}}=2\epsilon^2|g|N.
\end{aligned}\end{equation}

We now aim to establish an upper bound on the FIM using matrix inequalities, where $A \geq B$ denotes that $A - B$ is positive semidefinite.
As positive semidefinite symmetric matrix, FIM $F$ has its eigenvalues bounded by
\begin{equation}\begin{aligned}\label{eq:A_inequality}
&\lambda_{\text{max}}(F)\leq\tr(F)  .
\end{aligned}
\end{equation}
So, we can easily find the FIM is bounded by
\begin{equation}
F\leq 2N\epsilon^2(1+|g|^2) I_2,
\end{equation}
in the sense of matrix inequality.

Having established the case of interferometric imaging with two lenses, we now extend our analysis to the more general scenario with similar observations. 
In the case of interferometric imaging with $M$ lense, the received state is
\begin{equation}\begin{aligned}
\label{rho_M}
&\rho_M = \int \frac{d^{2M}\vec{\gamma}}{\pi^M \det \Gamma}\,\exp\Bigl[-\vec{\gamma}^\dagger \Gamma^{-1}\vec{\gamma}\Bigr] \ket{\vec{\gamma}}\bra{\vec{\gamma}}, \quad d^{2M}\vec{\gamma} = \prod_{j=1}^{M} d^2\gamma_j, \quad \vec{\gamma}=[\gamma_1,\gamma_2,\cdots,\gamma_M],\\
&\ket{\vec{\gamma}}=\prod_{j=1}^M\exp(\gamma_j \hat{a}_j^\dagger-\gamma_j^*\hat{a}_j)\ket{0},\quad \Gamma_{ij} = \frac{\epsilon}{2} \times
\begin{cases}
1, & \text{if } i=j, \\
g_{ij}, & \text{if } i\neq j,
\end{cases}\\
&V_\rho=\frac{1}{2}I_{2M}+G,\quad G=\left[
\begin{matrix}
\text{Re}\Gamma & -\text{Im} \Gamma\\
\text{Im} \Gamma & \text{Re}\Gamma
\end{matrix}\right],
\end{aligned}
\end{equation}
where $g_{ij}=|g_{ij}| e^{i\theta_{ij}}$, $g_{ij}=g_{ji}^*$. 
We adopt the convention that as the number of telescopes $M$ increases, the total mean photon number per temporal mode is given by $\tr\Gamma=\epsilon M/2$, which scales linearly with $M$ for convenience. This choice is justified, as increasing the number of telescopes effectively expands the light-collecting area, leading to a proportional increase in the number of collected stellar photons.

Note that when computing $\frac{\partial V}{\partial |g_{ij}|}$ for each $i,j$, the nonvanishing elements are exactly the same as those in Eq.~\ref{eq:pVpg}. A similar result holds for $\frac{\partial V}{\partial |\theta_{ij}|}$.
\begin{equation}
\frac{\partial V}{\partial |g_{ij}|}=V_{\partial |g_{ij}|}=\left[
\begin{matrix}
V_{\partial|g|} & 0\\
0 & 0
\end{matrix}\right],
\end{equation}
where $V_{\partial|g|}$ is given by Eq.~\ref{eq:pVpg}, with $\theta$  in Eq.~\ref{eq:pVpg} replaced by $\theta_{ij}$, and the order of elements has been adjusted for convenience. We can then define
\begin{equation}
U_1'=\left[
\begin{matrix}
U_1 & 0\\
0 & I
\end{matrix}\right].
\end{equation}
And similarly, we define $\Sigma_{\partial |g_{ij}|} = U_1' V_{\partial|g_{ij}|} U_1'^\dagger$. Considering $N$ copies of the state $\rho_M^{\otimes N}$, we can define $\Sigma^N$, $\Sigma^N_{\partial|g_{ij}|}$, and $\Sigma_\Pi^N$ by tensoring with $I_N$. We then have
\begin{equation}
\begin{aligned}
F_{|g_{ij}||g_{ij}|}&=\frac{1}{2}\tr((\Sigma^N)^{-1}\Sigma^N_{\partial|g_{ij}|}(\Sigma^N)^{-1}\Sigma^N_{\partial|g_{ij}|})\\
&\leq \frac{\epsilon^2}{8}\tr[(P(\Sigma^N)^{-1}P)^2]\\
&\leq \frac{\epsilon^2}{8}\tr[(P(\Sigma_\rho^N)^{-1}P)^2],\\
&P=I_N\otimes\left[
\begin{matrix}
I_4 & 0\\
0 & 0
\end{matrix}\right],
\end{aligned}\end{equation}
where, in the first inequality, we take the absolute values of the eigenvalues of $\Sigma^N_{\partial|g_{ij}|}$, similar to Eq.~\ref{eq:F_thetatheta}, $P$ is the projector onto the support of $\Sigma^N_{\partial|g_{ij}|}$. In the second inequality, we use the fact that $(\Sigma^N)^{-1} \leq (\Sigma^N_{\rho})^{-1}$. Since any diagonal block of a positive semidefinite matrix is also positive semidefinite, it follows that $P(\Sigma^N)^{-1} P \leq P(\Sigma^N_{\rho})^{-1} P$.
Since $G$ is positive semidefinite, we find that all of the eigenvalues of $(\Sigma^N_{\rho})^{-1}$ are bounded by $\lambda_i((\Sigma^N_{\rho})^{-1}) \leq 2 $.   Defining $A = P(\Sigma_\rho^N)^{-1} P$, we note that $\text{rank}(A) \leq 4N$ and that $\tr(A^2) \leq 4N \lambda_{\text{max}}(A^2)$. We now proceed to determine the largest eigenvalue of $A$.
\begin{equation}\begin{aligned}
&\lambda_{\text{max}}(A)=x^TAx=(Px)^T(\Sigma^N_{\rho})^{-1}(Px)\leq \lambda_{\text{max}}((\Sigma^N_{\rho})^{-1})||Px||^2\\
&\leq \lambda_{\text{max}}((\Sigma^N_{\rho})^{-1})||x||^2=\lambda_{\text{max}}((\Sigma^N_{\rho})^{-1})\leq 2,\\
\end{aligned}
\end{equation}
We have thus proved 
\begin{equation}
F_{|g_{ij}||g_{ij}|}\leq 2N\epsilon^2.
\end{equation}
We can also similarly prove
\begin{equation}
F_{\theta_{ij}\theta_{ij}}\leq 2N|g_{ij}|^2\epsilon^2.
\end{equation}
As FIM $F$ is a positive semidefinte matrix, we find that all off-diagonal elements of the FIM $F$ are also upper bounded by $NO(\epsilon^2)$. Thus, we have proven that the FIM scales as $NO(\epsilon^2)$ in the more general case of $M$ lenses in interferometric imaging.

As a multiparameter estimation problem with $M(M-1)$ unknown parameters in the case of $M$ telescopes, we can also bound the FIM using a matrix inequality based on Eq.~\ref{eq:A_inequality},
\begin{equation}
F\leq \sum_{i>j}2N\epsilon^2(1+|g_{ij}|^2)I_{M(M-1)}\leq 2M(M-1)N\epsilon^2I_{M(M-1)},
\end{equation}
where $i,j=1,2,\cdots, M$. Since the number of lenses $M$ in interferometric imaging is always finite, we conclude that the FIM scales as $NO(\epsilon^2)$ in the sense of matrix inequality.

\yk{As illustrated in Fig. \ref{fig:comparison}, Ref. \cite{tsang2011quantum} shows a scaling gap along the nonlocal and local measurement direction. Theorem~\ref{thm:two_lens} generalizes this result by identifying an analogous scaling gap between Gaussian and non-Gaussian measurements. The FIM reaches $N O(\epsilon)$, as indicated by the red shaded region, only when the measurement is both nonlocal and non-Gaussian.}

\begin{figure}[!tb]
\begin{center}
\includegraphics[width=0.5\columnwidth]{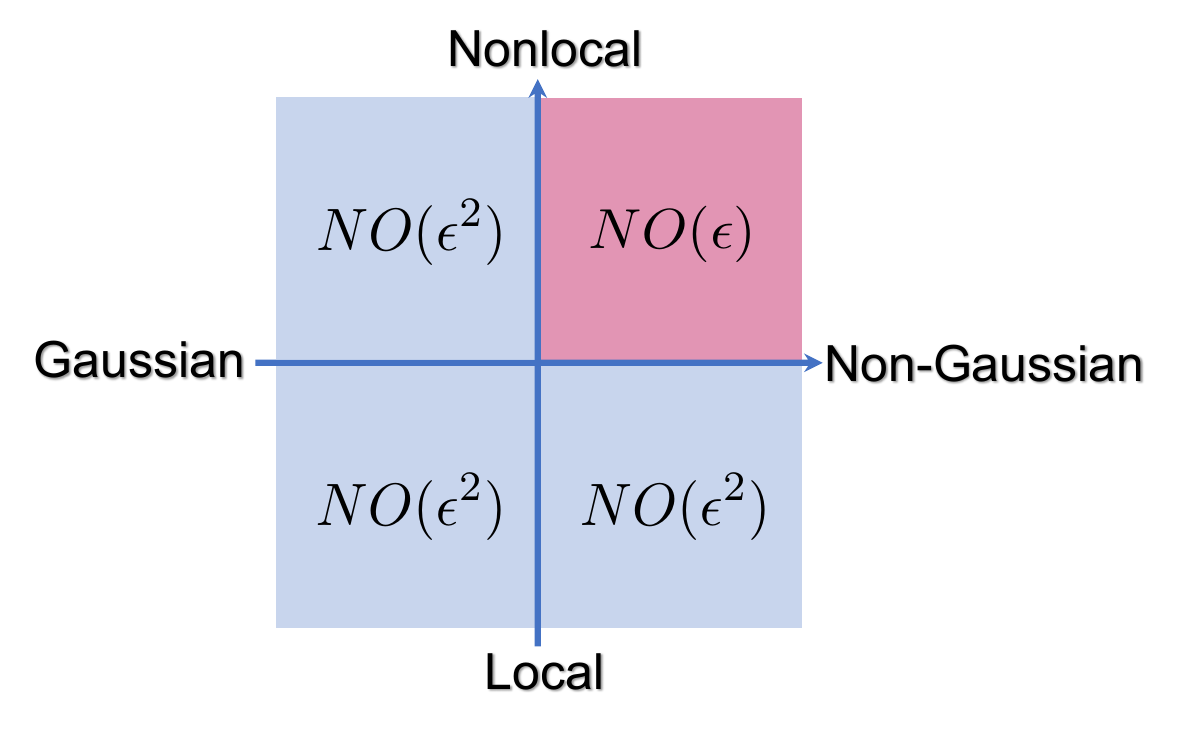}
\caption{The Fisher information matrix scaling using different measurement for interferometric imaging.}
\label{fig:comparison}
\end{center}
\end{figure}

\subsection{Proof of no-go theorem for imaging weak thermal sources using a single lens}\label{appendix:proof thm single lens}

\yk{
In interferometric imaging, we discuss the performance gap between local and nonlocal measurements. In contrast, single-lens imaging does not involve a comparable distinction between local and nonlocal measurements. We first provide a more detailed explanation of why this distinction between local and nonlocal measurement is meaningful and necessary for interferometric imaging, but not for single-lens imaging.

In a non-rigorous, conceptual picture, single-lens imaging can be regarded as a special case of interferometric imaging, where the lens aperture acts like a continuum of infinitesimally small, closely spaced sub-apertures. Each point on the aperture transmits part of the incoming wavefront, and because the field is coherent across the aperture, these contributions interfere at the detection plane. The lens imposes a spatially varying phase shift so that light from a given object point is redirected to a corresponding point in the image, forming the image through interference. Recall that, in interferometric imaging, direct interference between light from different apertures is a canonical example of  nonlocal measurements. In this sense, the natural interference of light from different sub-apertures at the detection plane inherently realizes a nonlocal measurement in the interferometric framework. But instead of separately detecting each sub-aperture and then extracting coherence information, the optical system  produces the image directly on the detection plane. Consequently, if single-lens imaging is viewed through the interferometric framework, it always corresponds to the ``nonlocal measurement'' case. However, because this nonlocality is built into the optical hardware and there is no meaningful choice between local and nonlocal detection, the distinction is usually not meaningful for single-lens imaging.

We emphasize that, in this conceptual picture, the longest baseline in single-lens imaging is simply the lens diameter. Since the resolution in interferometric imaging is determined by the baseline between different apertures, this picture likewise predicts that the resolution of single-lens imaging is set by its aperture diameter. The advantage of interferometric imaging lies in its ability to exploit much longer baselines between spatially separated apertures, making the role of nonlocal measurements nontrivial—unlike in the single-lens case, where such measurements are inherently realized by the optical system.

We now proceed to prove the no-go theorem for single-lens imaging of weak thermal sources using Gaussian measurements.
We first justify the form of thermal states on the detection plane in a single-lens imaging system.}
Assuming that the mutual coherence matrix at the source plane is given by $\Gamma^{(o)}$ and that $S$ represents the field scattering matrix, which characterizes the response of the imaging system to each point on the source, the mutual coherence matrix on the image plane is given by \cite{mandel1995optical}
\begin{equation}
\Gamma=S\Gamma^{(o)}S^\dagger.
\end{equation}
In the case of incoherent source, we have
\begin{equation}
\Gamma^{(o)}_{ij}=\delta_{ij}n_i,\quad \Gamma_{jk}=\sum_i n_i S_{ji}S^*_{ki},
\end{equation}
where $n_i$ represents the intensity of the $i$th point on the source. Defining the quantum efficiency for detecting each point on the source as $\eta_i = \sum_j |S_{ji}|^2$, we introduce the normalized field scatter matrix as $\psi_{ji} = S_{ji} / \sqrt{\eta_i}$.
If we define $\epsilon = \sum_i \eta_i n_i$, we obtain
\begin{equation}
\Gamma_{jk}=\epsilon\sum_i J_i \psi_{ji}\psi^*_{ki},
\end{equation}
where $J_i=\eta_in_i/(\sum_i \eta_in_i)$.

Its covariance matrix is given by
\begin{equation}
V_\rho=\frac{1}{2}I_{2M}+G,\quad G=\left[
\begin{matrix}
\text{Re}\Gamma & -\text{Im} \Gamma\\
\text{Im} \Gamma & \text{Re}\Gamma
\end{matrix}\right],\quad V_{\partial\theta}=\frac{\partial V}{\partial\theta},\quad G_{\partial\theta}=\frac{\partial G}{\partial\theta},
\end{equation}

For the non-Gaussian measurement, which projects onto $\ket{j}$, the probability of obtaining the outcome $\ket{j}$ is given by
\begin{equation}\begin{aligned}
&P(j)=\bra{j}\rho_W\ket{j}=\int \frac{d^{2W}\vec{\gamma}}{\pi^W \det \Gamma}\bra{0}a_j \ket{\vec{\gamma}}\bra{\vec{\gamma}}a_j^\dagger\ket{0}\\
&=\int \frac{d^{2W}\vec{\gamma}}{\pi^W \det \Gamma}\,\exp\Bigl[-\vec{\gamma}^\dagger \Gamma^{-1}\vec{\gamma}\Bigr] \exp\Bigl[-\sum_k|\gamma_k|^2\Bigr]|\gamma_j|^2\\
&=\frac{\det A}{\det \Gamma}A_{jj}=\Gamma_{jj}+O(\epsilon^2),\quad A^{-1}=\Gamma^{-1}+I,\\
&A=(I+\Gamma)^{-1}\Gamma, \quad (I+\Gamma)^{-1}=I-\Gamma+O(\epsilon^2),
\end{aligned}
\end{equation}
where we expand $(I+\Gamma)^{-1}$ in the limit $\|\Gamma\|\rightarrow0$ \cite{horn2012matrix}. Since the detection of vacuum states provides no information about $\theta$ and the probability of detecting more than one photon is of order $O(\epsilon^2)$, we obtain the FI for the non-Gaussian measurement in single-lens imaging, as presented in the main text.

For any Gaussian measurement, 
\begin{equation}\begin{aligned}\label{eq:F_theta_single_lens}
&F_{\theta\theta}\leq \frac{1}{2}\tr((V^N)^{-1}V^N_{\partial\theta}(V^N)^{-1}V^N_{\partial\theta})\\
&\leq \lambda_{\text{max}}((V^N_{\partial\theta})^2)\frac{1}{2}\tr((P(V^N)^{-1}P)^2) \\
&\leq\lambda_{\text{max}}((V^N_{\partial\theta})^2)\frac{1}{2}\tr((P(V^N_\rho)^{-1}P)^2), \\
\end{aligned}
\end{equation}
in the first inequality, we use the spectral decomposition $V^N_{\partial\theta} = \sum_i \lambda_i \ket{v_i} \bra{v_i}$ (note $V^N_{\partial\theta}$ is a real symmetric matrix), we obtain $F_{\theta\theta}=\frac{1}{2}\sum_{\lambda_{i}\lambda_j\neq0}\lambda_i\lambda_j|\bra{v_i}(V^N)^{-1}\ket{v_j}|^2\leq \frac{1}{2}|\lambda_{\text{max}}|^2\sum_{\lambda_{i}\lambda_j\neq0}|\bra{v_i}(V^N)^{-1}\ket{v_j}|^2$. $P$ is the projection onto the support of $V^N_{\partial\theta}$. The second inequality follows from the fact that $V^N \geq V_\rho^N$ and that a diagonal block of a positive semidefinite matrix remains positive semidefinite.
Since $G$ is positive semidefinite, all of the eigenvalues of $(V^N_{\rho})^{-1}$ satisfy $\lambda_i((V^N_{\rho})^{-1}) \leq 2$. Define $A = P (V_\rho^N)^{-1} P$. Clearly, $\text{rank}(A) \leq \text{rank}(P)$, and by definition, $\text{rank}(P) = N \text{rank}( V_{\partial\theta})$. Moreover, we have the bound $\tr(A^2) \leq \text{rank}(V_{\partial\theta}) N \lambda_{\text{max}}(A^2)$.
We now determine the largest eigenvalue of $A$,
\begin{equation}\begin{aligned}\label{eq:PVP}
&\lambda_{\text{max}}(A)=x^TAx=(Px)^T(V^N_{\rho})^{-1}(Px)\leq \lambda_{\text{max}}((V^N_{\rho})^{-1})||Px||^2\\
&\leq \lambda_{\text{max}}((V^N_{\rho})^{-1})||x||^2=\lambda_{\text{max}}((V^N_{\rho})^{-1})\leq 2,\\
\end{aligned}
\end{equation}
\begin{equation}\begin{aligned}
&F_{\theta\theta}\leq\lambda_{\text{max}}((V^N_{\partial\theta})^2) \text{rank}( V_{\partial\theta})2N\\
&=\max_i \left|\lambda_i(G_{\partial\theta})\right|^2 \text{rank}(G_{\partial\theta})2N=NO(\epsilon^2).
\end{aligned}
\end{equation}

We can similarly extend this discussion to the case of multiparameter estimation, where $\vec{\theta} = [\theta_1, \theta_2, \dots, \theta_Q]$. The elements of the FIM can be bounded as follows,
\begin{equation}\begin{aligned}
&F_{\theta_i\theta_i}\leq \max_j \left|\lambda_j(G_{\partial\theta_i})\right|^2 \text{rank}(G_{\partial\theta_i})2N=NO(\epsilon^2).
\end{aligned}
\end{equation}
Similarly, we have $F_{\theta_i\theta_j} \leq \sqrt{F_{\theta_i\theta_i} F_{\theta_j\theta_j}}$. Using Eq.~\ref{eq:A_inequality}, we can further bound the FIM in the sense of a matrix inequality
\begin{equation}
F\leq \sum_{i=1}^Q \max_j \left|\lambda_j(G_{\partial\theta_i})\right|^2 \text{rank}(G_{\partial\theta_i})2N \,I_Q.
\end{equation}
If the number of unknown parameters to be estimated remains finite and independent of $\epsilon$, this factor does not affect the claimed $NO(\epsilon^2)$ scaling.

\subsection{Examples}

\subsubsection{Example of interferometric imaging using non-Gaussian measurement}\label{appendix:non-Gaussian}

In this subsection, we analyze a possible non-Gaussian measurement in the context of interferometric imaging with two lenses, which receive weak thermal states as described in the main text
\begin{equation}\begin{aligned}\label{rho_thermal}
&\rho=\int\frac{d^2\alpha d^2\beta}{\pi^2\det\Gamma}\exp(-\vec{\gamma}^\dagger \Gamma^{-1}\vec{\gamma})\ket{\vec{\gamma}}\bra{\vec{\gamma}},\\
&\vec{\gamma}=[\alpha,\beta]^T,\quad \Gamma=  \frac{\epsilon}{2}\left[
\begin{matrix}
1 & g\\
g^* & 1
\end{matrix}\right],\\
&\ket{\vec{\gamma}}=\exp(\alpha \hat{a}^\dagger-\alpha^*\hat{a})\exp(\beta \hat{b}^\dagger-\beta^*\hat{b})\ket{0}.
\end{aligned}\end{equation}
If we project the state onto the basis
\begin{equation}
\ket{\pm}=(\ket{01}+e^{i\delta}\ket{10})/\sqrt{2},
\end{equation}
where $\ket{01}$ and $\ket{10}$ represent single-photon states in the two spatial modes, respectively, and $\delta$ is a phase delay that can be chosen in the measurement. Note that this measurement can be implemented by first combining the light received by the two lenses on a beam splitter, followed by single-photon detection at the two output ports. This constitutes a nonlocal measurement.
The probability distribution can then be determined as
\begin{equation}
P(\pm)=\frac{\epsilon}{2}(1\pm|g|\cos(\theta+\delta)),
\end{equation}
which gives the FIM for estimating the amplitude $|g|$ and phase $\theta$ of the coherence function $g = |g| e^{i\theta}$ on $N$ copies of the state as
\begin{equation}
F=\frac{N\epsilon}{1-|g|^2\cos^2(\theta+\delta)}\left[\begin{matrix}
\cos^2(\theta+\delta) & -|g|\sin(\theta+\delta)\cos(\theta+\delta)\\
-|g|\sin(\theta+\delta)\cos(\theta+\delta) & |g|^2\sin^2(\theta+\delta)
\end{matrix}\right].
\end{equation}
Note that this matrix is not full rank, but this issue can be resolved by using two different phase delays, $\delta$. Importantly, the FIM in this case scales as $N\Theta(\epsilon)$.

\subsubsection{Example of interferometric imaging with Gaussian measurement}\label{appendix:examples}

We continue to examine measurements on the two-mode weak thermal state in Eq.~\ref{rho_thermal} for interferometric imaging with two lenses. To implement a nonlocal Gaussian measurement, we first combine the two modes on a beam splitter. At the two output ports, we consider both homodyne and heterodyne detection.

In the case of homodyne detection, we consider four different scenarios where the two output ports, labeled as 1 and 2,  which measures the observables $\hat{x}_1$ or  $\hat{p}_1$ and $\hat{x}_2$, or $\hat{p}_2$, respectively.

When we detect, $\hat{x}_1,\hat{x}_2$ or $\hat{p}_1, \hat{p}_2$,  we can describe the measurement on the states at the two output ports of the beam splitter with the covariance matrix of the POVM
\begin{equation}\begin{aligned}
V_\Pi=\text{diag}[0,\infty,0,\infty], \quad\text{for}\quad \hat{x}_1,\hat{x}_2    \\
V_\Pi=\text{diag}[\infty,0,\infty,0], \quad\text{for}\quad \hat{p}_1,\hat{p}_2 
\end{aligned}
\end{equation}
the FIM is given by
\begin{equation}\begin{aligned}
F=&
\frac{N
  \epsilon^2  \left( 2 + \epsilon \left( 4 + \epsilon \left( 2 + |g|^2 \right) \right) + \epsilon^2 |g|^2 \cos(2\theta) \right)
}{
  \left((1+\epsilon)^2 - \epsilon^2 |g|^2 \cos^2\theta\right)^2
}\left[\begin{matrix}
\cos^2\theta & -\sin 2\theta\\
-\sin 2\theta & \sin^2\theta
\end{matrix}\right],    
\end{aligned}
\end{equation}
where the first and second parameters of $F$ is $|g|$ and $\theta$.

When we detect, $\hat{x}_1,\hat{p}_2$ or $\hat{p}_1, \hat{x}_2$, we can describe the measurement on the states at the two output ports of the beam splitter with the covariance matrix of the POVM
\begin{equation}\begin{aligned}
V_\Pi=\text{diag}[0,\infty,\infty,0], \quad\text{for}\quad \hat{x}_1,\hat{p}_2    \\
V_\Pi=\text{diag}[\infty,0,0,\infty], \quad\text{for}\quad \hat{p}_1,\hat{x}_2 
\end{aligned}
\end{equation}
\begin{equation}
F_{|g||g|}=N\epsilon^2 \left(\frac{1}{\left(1+\epsilon-\epsilon|g|\right)^2} + \frac{1}{\left(1+\epsilon+\epsilon|g|\right)^2}\right),
\end{equation}
\begin{equation}
F_{\theta\theta}=\frac{2 N\epsilon^2 |g|^2}{1+\epsilon\left(2+\epsilon-\epsilon|g|^2\right)},
\end{equation}
\begin{equation}
F_{|g|\theta}=0.
\end{equation}
When they do heterodyne detection, which projects onto the basis $\{\frac{1}{\pi^2}\ket{\alpha\beta}\bra{\alpha\beta}\}$, we can describe the measurement on the states at the two output ports of the beam splitter with the covariance matrix of the POVM 
\begin{equation}\begin{aligned}
V_\Pi=\frac{1}{2}I_4,
\end{aligned}
\end{equation}
the FIM is given by
\begin{equation}
F_{|g||g|}=2N\epsilon^2 \left(\frac{1}{\left(2+\epsilon-\epsilon|g|\right)^2} + \frac{1}{\left(2+\epsilon+\epsilon|g|\right)^2}\right),
\end{equation}
\begin{equation}
F_{\theta\theta}=\frac{4N\epsilon^2 |g|^2}{4+\epsilon\left(4+\epsilon-\epsilon|g|^2\right)},
\end{equation}
\begin{equation}
F_{|g|\theta}=0.
\end{equation}
Thus, for all the Gaussian measurements considered above, the FIM scales as $NO(\epsilon^2)$, consistent with the general proof.

\section{No-go theorem for superresolution using Gaussian measurement}\label{Appendix:superresolution}

\yk{
\subsection{Review of  superresolution}
\label{SI:preliminary_superresolution}

Superresolution exploits the fact that a carefully designed measurement can yield a much larger FIM than direct imaging when the source size is well below the Rayleigh limit of the imaging system. Ref.~\cite{tsang2016quantum} showed that, for estimating the separation between two point sources, the FI of direct imaging vanishes as the separation approaches zero, reflecting Rayleigh's limit. In contrast, the FI for a spatial-mode demultiplexing (SPADE) measurement in the Hermite–Gaussian basis remains constant, demonstrating that an appropriately chosen measurement strategy can circumvent Rayleigh's limit.

Later on, this discussion is extended to consider a general source whose size $L\rightarrow 0$ \cite{zhou2019modern,tsang2017subdiffraction,tsang2019quantum}. We now review the discussion in Ref. \cite{zhou2019modern} using our notation in more detail for later use. A general incoherent source can be modeled as  
\begin{equation}
\rho=\int dy dx_1 dx_2 \zeta(y)\psi(y-x_1)\ket{x_1}\bra{x_2}\psi(y-x_2),
\end{equation}
where $\ket{x}=a^\dagger_x\ket{0}$ is the single photon state at position $x$, $\psi(x)$ is the point spread function (PSF).  Assume the normalized source intensity $\zeta(y)$ is confined within the interval $[-L/2+y_0, L/2+y_0]$, we expand the PSF 
\begin{equation}
\psi(x_1-y)=\sum_{n=0}^\infty \left(\frac{\partial^n \psi(x_1-y)}{\partial y^n}\bigg|_{y=y_0}\frac{L^n}{n!}\right)\left(\frac{y-y_0}{L}\right)^n=:\sum_{n=0}^\infty \psi^{(n)}(x_1)\left(\frac{y-y_0}{L}\right)^n.
\end{equation}
We can then get
\begin{equation}\begin{aligned}
\rho&=\sum_{m,n=0}^\infty\left(\int du \zeta(y)\left(\frac{y-y_0}{L}\right)^{m+n}\right)\left(\int dx_1\psi^{(m)}(x_1)\ket{x_1}\right)\left(\int dx_2\psi^{(m)}(x_2)\bra{x_2}\right)\\
&=\sum_{m,n=0}^\infty t_{m+n}\ket{\psi^{(m)}}\bra{\psi^{(n)}}=\sum_{k=0}^\infty t_k\rho^{(k)},
\end{aligned}\end{equation}
where $\rho^{(k)}:=\sum_{m+n=k}\ket{\psi^{(m)}}\bra{\psi^{(n)}}$, the $k$th moment is defined as $t_k:=\int dy \zeta(y)\left(\frac{y-y_0}{L}\right)^k$.
Our goal is to design a measurement that suppresses the dominant noise arising from the lower-order terms of $L$ when estimating higher-order moments. To accomplish this, we build an orthonormal set of measurement states ${\ket{b_l}}_l$ using the Gram–Schmidt process,  
\begin{equation}
a_{ml}=\bra{\psi^{(m)}}\ket{b_l}\left\{
\begin{array}{cc}
=0    &  m\leq l-1\\
\neq 0     &  m\geq l
\end{array}\right.
\end{equation}
The measurement is then constructed as 
\begin{equation}\label{ref5_POVM}
\left\{\frac{1}{2}\ket{\phi_{i,\pm}}\bra{\phi_{i,\pm}},\frac{1}{2}\ket{b_0}\bra{b_0}\right\}, 
\end{equation}
\begin{equation}
\ket{\phi_{i,\pm}}=(\ket{b_i}\pm\ket{b_{i+1}})/\sqrt{2},\quad i=0,1,2,\cdots
\end{equation}  
The performance of the above measurement is explicitly quantified by the FIM.  
\begin{equation}\label{Fij_suboptimal}
F_{t_it_j}=\Theta(L^{i+j-2\lfloor \min\{i,j\}/2 \rfloor}).
\end{equation}  
In particular, the diagonal elements $F_{t_it_i}$ scale as $\Theta(L^{0}), \Theta(L^{2}), \Theta(L^{2}), \Theta(L^{4}), \Theta(L^{4}), \cdots$ for $i = 0, 1, 2, 3, 4, \cdots$.

In direct imaging, the measurement corresponds to projecting onto $\ket{x}\bra{x}$ to obtain the intensity distribution at each position $x$ on the detection plane. The FIM for estimating $t_n$ can then be computed as
\begin{equation}
F_{t_it_j}=\Theta(L^{i+j}).
\end{equation}
This value is significantly smaller than that obtained with the measurement in Eq.~\ref{Fij_suboptimal}. In particular, the diagonal elements scale as $F_{t_i t_i} = \Theta(L^{0}), \Theta(L^{2}), \Theta(L^{4}), \Theta(L^{6}), \Theta(L^{8}), \ldots$ for $i = 0, 1, 2, 3, 4, \ldots$, highlighting that a properly designed measurement strategy can substantially enhance sensitivity.

}

\subsection{Proof of the no-go theorem for the superresolution of two point sources
}\label{appendix:proof thm superresolution}
We first derive the form of the state received by a single lens when imaging two weak thermal point sources of equal strength located at positions $\pm L/2$. On the source plane, before entering the imaging system, the state is given by
\begin{equation}\begin{aligned}
&\rho=\int\frac{d^2\alpha d^2\beta}{\pi^2\det\Gamma_0}\exp(-[\alpha^*,\beta^*] \Gamma_0^{-1}[\alpha,\beta]^T)\ket{\alpha}\bra{\alpha}\otimes\ket{\beta}\bra{\beta},\\
&\Gamma_0=  \frac{\epsilon}{2}\left[
\begin{matrix}
1 & 0\\
0 & 1
\end{matrix}\right],\\
&\ket{\alpha}\otimes\ket{\beta}=\exp(\alpha \hat{a}^\dagger-\alpha^*\hat{a})\exp(\beta \hat{b}^\dagger-\beta^*\hat{b})\ket{0}.
\end{aligned}\end{equation}
We derive the state received on the imaging plane by evolving the mode operators through the imaging system.
\begin{equation}
a\rightarrow \int dx \psi(x-L/2)c_x,\quad b\rightarrow \int dx \psi(x+L/2)c_x,
\end{equation}
\yk{where $c_x$ is the annihilation operator at position $x$ on the detection plane, $\psi(x)$ is the PSF and can have any general shape.} The state received at the detection plane is given by
\begin{equation}\begin{aligned}
&\rho=\int\frac{\prod_{i=1}^Wd^2\gamma_{x_i}}{\det(\pi\Gamma)}\exp(-\vec{\gamma}^\dagger \Gamma^{-1}\vec{\gamma})\ket{\vec{\gamma}}\bra{\vec{\gamma}},\\
&\vec{\gamma}=[\gamma_{x_1},\gamma_{x_2},\cdots,\gamma_{x_W}]^T,\quad \Gamma=R\Gamma_0R^\dagger=\frac{\epsilon}{2}(\psi_0\psi_0^\dagger+\psi_1\psi_1^\dagger), \\
&R=\left[
\begin{matrix}
\psi(x_1-L/2) & \psi(x_1+L/2) \\
\psi(x_2-L/2) & \psi(x_2+L/2) \\
\vdots & \vdots\\
\psi(x_W-L/2) & \psi(x_W+L/2) \\
\end{matrix}\right]=[\psi_0,\psi_1],\\
&\ket{\vec{\gamma}}=\exp(\sum_i\gamma_{x_i}c_{x_i}^\dagger-\gamma_{x_i}^*c_{x_i})\ket{0},
\end{aligned}\end{equation}
Since $\Gamma$ is real for the chosen PSF, the covariance matrix of the state is given by
\begin{equation}
V_\rho=\frac{1}{2}I_{2W}+G,\quad G=I_2\otimes \Gamma.
\end{equation}
The FIM element for estimating the separation $L$ from $\rho^{\otimes N}$ is given by (as before, we use $V^N, G^N, \dots$ to denote the tensor product with $I_N$ when considering $N$ copies of the state)
\begin{equation}
F_{LL}=\frac{1}{2}\tr((V^N)^{-1}\frac{\partial V^N}{\partial L}(V^N)\frac{\partial V^N}{\partial L})=\frac{1}{2}\tr((V^N)^{-1}\frac{\partial G^N}{\partial L}(V^N)^{-1}\frac{\partial G^N}{\partial L}),
\end{equation}
where $V^N=V^N_\rho+V^N_\Pi$, $V^N_\Pi$ is covariance matrix of the Gaussian measurement. 
Note that
\begin{equation}\begin{aligned}
&\psi_0=e^{(0)}-\frac{L}{2}e^{(1)}+\frac{L^2}{8}e^{(2)}+o(L^2),\quad \psi_1=e^{(0)}+\frac{L}{2}e^{(1)}+\frac{L^2}{8}e^{(2)}+o(L^2), \\  
&e^{(0)}=[\psi(x_1),\psi(x_2),\cdots,\psi(x_W)]^T,\\
&e^{(1)}=[\psi^{(1)}(x_1),\psi^{(1)}(x_2),\cdots,\psi^{(1)}(x_W)]^T,\\
&e^{(2)}=[\psi^{(2)}(x_1),\psi^{(2)}(x_2),\cdots,\psi^{(2)}(x_W)]^T,
\end{aligned}
\end{equation}
where $\psi^{(1)}(x)=d\psi/dx$, $\psi^{(2)}(x)=d^2\psi/dx^2$.
\yk{For a Gaussian PSF $\psi(x) = (2\pi\sigma^2)^{-1/4}\exp(-x^2/(4\sigma^2))$ and taking $W \to \infty$ to enable the use of integrals in evaluating $|e^{(i)}|$, we obtain
\begin{equation}\begin{aligned}
&\|e^{(0)}\|^2=\int dx \psi(x)^2=1,\\
&\|e^{(1)}\|^2=\int dx\frac{x^2}{4\sigma^2} \psi(x)^2=\frac{1}{4\sigma^2},\\
& \|e^{(2)}\|^2=\int dx\frac{(x^2-2\sigma^2)^2}{16\sigma^8} \psi(x)^2=\frac{3}{16\sigma^4},
\end{aligned}
\end{equation}
However, we emphasize that the above proof is valid for any PSF shape; only the constant arising from $\|e^{(i)}\|$ depends on the specific form of the PSF.}
We expand $\frac{\partial\Gamma}{\partial L}$ as a series of $L$ as $L\rightarrow 0$
\begin{equation}
\frac{\partial\Gamma}{\partial L}=\frac{\epsilon}{2}L\left(e^{(1)}(e^{(1)})^\dagger+\frac{1}{2}e^{(0)}(e^{(2)})^\dagger+\frac{1}{2}e^{(2)}(e^{(0)})^\dagger\right)+\epsilon O(L^2).
\end{equation}
Note that as we take $W \rightarrow \infty$, only the dimension of $e^{(i)}$ in $\frac{\partial \Gamma}{\partial L}$ is affected, while $\|e^{(i)}\|^2$ remains finite. Consequently, the term $\epsilon O(L^2)$ does not diverge as $W \rightarrow \infty$.
Importantly, the $\Theta(L^0)$ order terms of $\frac{\partial\Gamma}{\partial L}$ vanishes.
We can easily find that
\begin{equation}
 \lambda_{\text{max}}\left(\left(\frac{\partial \Gamma}{\partial L}\right)^2\right)\leq \left(\frac{\epsilon}{2}L(||e^{(1)}||^2+  \|e^{(0)}\|\|e^{(2)}\|)\right)^2+\epsilon^2 O(L^3)=\frac{\epsilon^2L^2(\sqrt{3}+1)^2}{64\sigma^4}+\epsilon^2 O(L^3),
\end{equation}
where we treat $\epsilon$ as a constant without taking the limit $\epsilon\rightarrow 0$.
Similar to the proof of Eq.~\ref{eq:F_theta_single_lens}, we can have
\begin{equation}
F_{LL}\leq \frac{1}{2}\lambda_{\text{max}}\left(\left(\frac{\partial \Gamma}{\partial L}\right)^2\right)\tr((P(V^N)^{-1}P)^2),
\end{equation}
where $P$ is the projector onto the support of $\frac{\partial G^N}{\partial L}$, leading to $\text{rank}(P) = \text{rank}(\frac{\partial G^N}{\partial L}) \leq 8N$. Since a diagonal block of a positive semidefinite matrix remains positive semidefinite, it follows that
\begin{equation}
F_{LL}\leq \frac{1}{2}\lambda_{\text{max}}\left(\left(\frac{\partial \Gamma}{\partial L}\right)^2\right)\tr((P(V^N_\rho)^{-1}P)^2)\leq \frac{1}{2}\lambda_{\text{max}}\left(\left(\frac{\partial \Gamma}{\partial L}\right)^2\right)\text{rank}(\frac{\partial G^N}{\partial L})\lambda_{\text{max}}\left(V_\rho^{-2}\right),
\end{equation}
where, in the second inequality, we use the fact that $\lambda_{\text{max}}\left(P(V^N_\rho)^{-2}P\right) \leq \lambda_{\text{max}}\left(V_\rho^{-2}\right)$, similar to Eq.~\ref{eq:PVP}. Since $G$ is positive semidefinite, all of the eigenvalues of $(V_{\rho})^{-1}$ satisfy $\lambda_i((V_{\rho})^{-1}) \leq 2$.
Collecting all the bounds above, we have shown that
\yk{\begin{equation}\label{SI_eq:FI_L}
F_{LL}\leq4\epsilon^2L^2(||e^{(1)}||^2+  \|e^{(0)}\|\|e^{(2)}\|)^2+N\epsilon^2O(L^3)= \frac{N\epsilon^2L^2(\sqrt{3}+1)^2}{4\sigma^4}+N\epsilon^2O(L^3),
\end{equation}
where the equality is for the Gaussian PSF as an example. The bound for $F_{LL}$ confirms the $NO(\epsilon^2)$ scaling as claimed for imaging weak thermal sources using Gaussian measurements.} Furthermore, as $L \to 0$, we find that $F \to 0$, implying that Gaussian measurements cannot achieve $F_{LL}$ independent of the separation $L$ between two point sources. Consequently, superresolution, as demonstrated in Ref.~\cite{tsang2016quantum}, is not attainable with Gaussian measurements.
We note that $F_{LL}$ is not a function of $L/\sigma$ because the FI for estimating $L$ involves differentiation of a unitless probability with respect to $L$, which has units and is therefore not unitless. Consequently, the denominator of $F$ contains a factor of $\sigma^4$, while the numerator includes a factor of $L^2$. See, for example, Ref.~\cite{tsang2016quantum}, which also derives FIM expressions that are not functions of $L/\sigma$. \yk{We emphasize that the prefactor $(\sqrt{3}+1)^2/4$ of $F_{LL}$ in Eq.~\ref{SI_eq:FI_L} is determined by the Gaussian PSF considered here as an example; however, the no-go proof applies to any PSF, with only the constant prefactor $(||e^{(1)}||^2+  \|e^{(0)}\|\|e^{(2)}\|)^2$ changing, as it is determined by the given PSF.
}

We can also extend this no-go theorem for superresolution to the case of interferometric imaging. We refer the reader to Ref.~\cite{wang2021superresolution} for the derivation of the received state and the discussion using non-Gaussian measurement, which can achieve a FI that is independent of $L$ and is of order $N\epsilon$. We now prove that any Gaussian measurement can only achieve an FI of order $N\epsilon^2O(L^2)$. The covariance matrix of the received states by two lenses when imaging two thermal sources at positions $\pm L/2$ is given by
\begin{equation}
V_\rho=\frac{1}{2}I_4+G, \quad G=I_2\otimes\Gamma,\quad \Gamma=\frac{\epsilon}{2}\left[
\begin{matrix}
1 & \cos(kL)  \\   
\cos(kL) & 1\\
\end{matrix}\right],
\end{equation}
where $k$ is a constant depends on the distance between the two lenses. The FI of estimating $L$ is bounded by
\begin{equation}
F_{LL}=\frac{1}{2}\tr((V^N)^{-1}\frac{\partial V^N}{\partial L}(V^N)^{-1}\frac{\partial V^N}{\partial L})\leq\frac{1}{2}\lambda_{\text{max}}\left(\left(\frac{\partial \Gamma}{\partial L}\right)^2\right)\text{rank}(\frac{\partial G^N}{\partial L})\lambda_{\text{max}}\left(V_\rho^{-2}\right),
\end{equation}
where   $ \text{rank}(\frac{\partial G^N}{\partial L}) \leq 4N$, $\lambda_{\text{max}}\left(\left(\frac{\partial \Gamma}{\partial L}\right)^2\right)=\epsilon^2k^2\sin^2(kL)/4$, all of the eigenvalues of $(V_{\rho})^{-1}$ satisfy $\lambda_i((V_{\rho})^{-1}) \leq 2$. We thus have
\begin{equation}
F_{LL}\leq 2N\epsilon^2k^2\sin^2(kL)=N\epsilon^2O(L^2),
\end{equation}

\yk{
\subsection{Proof of Theorem \ref{thm:superreoslution} for the superresolution of general sources}

The state on the source plane, before entering the imaging system, is given by
\begin{equation}\begin{aligned}
&\rho=\int\frac{d^{2Q}\vec{\alpha} }{\pi^Q\det\Gamma_0}\exp(-\vec{\alpha}^\dagger \Gamma_0^{-1}\vec{\alpha})\ket{\vec{\alpha}}\bra{\vec{\alpha}},\\
&\Gamma_0=  \epsilon\,\text{diag}[\zeta_1,\zeta_2,\cdots,\zeta_Q],\quad\sum_{i=1}^Q\zeta_i=1\\
&\ket{\vec{\alpha}}=\prod_{i=1}^Q\exp(\alpha_i \hat{a}_i^\dagger-\alpha_i^*\hat{a}_i)\ket{0}.
\end{aligned}\end{equation}
where $Q\rightarrow\infty$ is the number of points on the source. The imaging system can cause the evolution
\begin{equation}
a_i\rightarrow \int dx \psi(x-y_i)c_x,
\end{equation}
where $c_x$ is the annihilation operator at position $x$ on the detection plane, $\psi(x)$ is the PSF, $y_i$ is the position of $i$th point on the source. Note that we assume $y_i\in[-L/2+y_0,L/2+y_0]$, where $L$ is the size of the source, $y_0$ is the centroid of the source. The state received at the detection plane is given by
\begin{equation}\begin{aligned}
&\rho=\int\frac{\prod_{i=1}^Wd^2\gamma_{x_i}}{\det(\pi\Gamma)}\exp(-\vec{\gamma}^\dagger \Gamma^{-1}\vec{\gamma})\ket{\vec{\gamma}}\bra{\vec{\gamma}},\\
&\vec{\gamma}=[\gamma_{x_1},\gamma_{x_2},\cdots,\gamma_{x_W}]^T,\quad \Gamma=R\Gamma_0R^\dagger=\sum_i\epsilon \zeta_i\psi_i\psi_i^\dagger, \\
&R=[\psi_1,\psi_2,\cdots,\psi_Q],\quad\psi_i=[\psi(x_1-y_i),\psi(x_2-y_i),\cdots,\psi(x_W-y_i)]^T,\\
&\ket{\vec{\gamma}}=\exp(\sum_i\gamma_{x_i}c_{x_i}^\dagger-\gamma_{x_i}^*c_{x_i})\ket{0},
\end{aligned}\end{equation}
Since $\Gamma$ is real for real PSF, the covariance matrix of the state is given by
\begin{equation}
V_\rho=\frac{1}{2}I_{2W}+G,\quad G=I_2\otimes \Gamma.
\end{equation}
The FIM element for estimating the separation $L$ from $\rho^{\otimes N}$ is given by (as before, we use $V^N, G^N, \dots$ to denote the tensor product with $I_N$ when considering $N$ copies of the state)
\begin{equation}
F_{t_nt_n}=\frac{1}{2}\tr((V^N)^{-1}\frac{\partial V^N}{\partial t_n}(V^N)\frac{\partial V^N}{\partial t_n})=\frac{1}{2}\tr((V^N)^{-1}\frac{\partial G^N}{\partial t_n}(V^N)^{-1}\frac{\partial G^N}{\partial t_n}),
\end{equation}
where $V^N=V^N_\rho+V^N_\Pi$, $V^N_\Pi$ is covariance matrix of the Gaussian measurement. 
We use the expansion below
\begin{equation}
\begin{aligned}
&\psi_i=\sum_{n=0}^\infty\left(\frac{y_i-y_0}{L}\right)^n\omega^{(n)} ,\quad \omega^{(n)}=\frac{L^n}{n!}\left[\frac{\partial^n \psi(x_1-y)}{\partial y^n}\bigg|_{y=y_0},\frac{\partial^n \psi(x_2-y)}{\partial y^n}\bigg|_{y=y_0},\cdots, \frac{\partial^n \psi(x_W-y)}{\partial y^n}\bigg|_{y=y_0}\right]^T, \\
&\Gamma=\sum_{n_1,n_2=0}^\infty\epsilon \, t_{n_1+n_2}\,\omega^{(n_1)}(\omega^{(n_2)})^T,\quad t_n:=\sum_{i=1}^Q\zeta_i\left(\frac{y_i-y_0}{L}\right)^n
\end{aligned}
\end{equation}
where $t_n$ is the normalized moments. Note that $\omega^{(n)}\propto L^n$. In the small source limit, which is relevant for the superresolution discussion, we have $L\rightarrow 0$. Thus, we have get an expansion of $\Gamma$ as a series of $L$.  We can then take the derivative
\begin{equation}
\frac{\partial \Gamma}{\partial t_n}=\sum_{n_1+n_2=n}\epsilon\omega^{(n_1)}(\omega^{(n_2)})^T=\Theta(L^n)
\end{equation}
We can thus bound the eigenvalue 
\begin{equation}\begin{aligned}
&\lambda_{\text{max}}\left(\frac{\partial \Gamma}{\partial t_n}\right)\leq \sum_{n_1+n_2=n}\epsilon\|\omega^{(n_1)}\|\|\omega^{(n_2)}\|\\
&\|\omega^{(n)}\|^2=\frac{L^{2n}}{(n!)^2}\int dx \left(\frac{\partial^n \psi(x-y)}{\partial y^n}\bigg|_{y=y_0}\right)^2\propto L^{2n}\\
\end{aligned}\end{equation}
For the case of Gaussian PSF $\psi(x)=(2\pi\sigma^2)^{-1/4}\exp(-x^2/(4\sigma^2))$, we can easily calculate the value of each $\|\omega^{(n)}\|$
\begin{equation}
\|\omega^{(0)}\|^2=1,\quad \|\omega^{(1)}\|^2=\frac{L^2}{4\sigma^2},\quad \|\omega^{(2)}\|^2=\frac{3L^4}{64\sigma^4},\cdots
\end{equation}
We emphasize that our proof applies to any PSF, and the specific shape of the PSF only affects the constant factors arising from $\|\omega^{(n)}\|^2$.
Similar to the proof of Eq.~\ref{eq:F_theta_single_lens}, we can have
\begin{equation}
F_{t_nt_n}\leq \frac{1}{2}\lambda_{\text{max}}\left(\left(\frac{\partial \Gamma}{\partial t_n}\right)^2\right)\tr((P(V^N)^{-1}P)^2),
\end{equation}
where $P$ is the projector onto the support of $\frac{\partial G^N}{\partial t_n}$, leading to $\text{rank}(P) = \text{rank}(\frac{\partial G^N}{\partial t_n}) \leq 2N(n+1)$. Since a diagonal block of a positive semidefinite matrix remains positive semidefinite, it follows that
\begin{equation}
F_{t_nt_n}\leq \frac{1}{2}\lambda_{\text{max}}\left(\left(\frac{\partial \Gamma}{\partial t_n}\right)^2\right)\tr((P(V^N_\rho)^{-1}P)^2)\leq \frac{1}{2}\lambda_{\text{max}}\left(\left(\frac{\partial \Gamma}{\partial t_n}\right)^2\right)\text{rank}(\frac{\partial G^N}{\partial t_n})\lambda_{\text{max}}\left(V_\rho^{-2}\right),
\end{equation}
where, in the second inequality, we use the fact that $\lambda_{\text{max}}\left(P(V^N_\rho)^{-2}P\right) \leq \lambda_{\text{max}}\left(V_\rho^{-2}\right)$, similar to Eq.~\ref{eq:PVP}. Since $G$ is positive semidefinite, all of the eigenvalues of $(V_{\rho})^{-1}$ satisfy $\lambda_i((V_{\rho})^{-1}) \leq 2$.
Collecting all the bounds above, we have shown that
\begin{equation}
F_{t_nt_n}\leq 4N(n+1)\epsilon^2\left(\sum_{n_1+n_2=n}\|\omega^{(n_1)}\|\|\omega^{(n_2)}\|\right)^2=\Theta(N\epsilon^2L^{2n})
\end{equation}
Then, since the FIM is a positive semidefinite matrix, we can bound each element of the FIM as
\begin{equation}
F_{t_nt_m}\leq \sqrt{F_{t_nt_n}F_{t_mt_m}}=\Theta(N\epsilon^2L^{n+m})
\end{equation}

}

\yk{
\section{Unbiased estimator}\label{Appendix:Unbiased estimator}

The standard Cramer-Rao bound constrains only the precision of unbiased estimators. And biased estimators may potentially violate the standard Cramer-Rao bound, representing a possible loophole. Despite this limitation, the standard Cramer-Rao bound remains a widely used tool in the majority of quantum imaging studies and appears to yield reliable results in most cases. The discussion on this issue for the superresolution problem is given in the simple case of resolving the separation $L$ between two point source \cite{tsang2018conservative}. They show that even including the biased estimator, the superresolution approach using non-Gaussian measurement could still show better performance than the direct imaging. And here, we further show that even with biased estimator, the Gaussian measurement always has only comparable performance to the direct imaging. We first review their results and then present our results.

\subsection{Review of Ref.~\cite{tsang2018conservative}}




We summarize the results in Ref. \cite{tsang2018conservative} below with our notation convention. To address the limitation of the standard Cramer-Rao bound, which excludes biased estimators, they use the Bayesian Cramer-Rao bound \cite{trees2007bayesian,gill1995applications,schutzenberger1957generalization,van2004detection}, which applies to both biased and unbiased estimators. The Bayesian Cramer-Rao bound constrains the mean square error $\int dL \, p(L)\,\text{MSE}(L)$ for estimating the unknown parameter $L$ given the prior distribution $p(L)$. They introduce the worst-case error $\sup_L \text{MSE}(L)$, defined as the maximum mean square error for estimating $L$ across all possible values of $L$, which can be bounded by the mean square error $\int dL \, p(L)\,\text{MSE}(L)$.
\begin{equation}\label{SI_eq:K_p}
\begin{aligned}
&\sup_L\text{MSE} (L)\geq \int dL \,p(L)\,\text{MSE} (L)\geq \frac{1}{K[p(L)]},\\
&K[p(L)]=\int dL p(L) F(L)+j[p(L)],\\
&j[p(L)]=\int dL p(L)\left[\frac{\partial \ln p(L)}{\partial L}\right]^2,
\end{aligned}
\end{equation}
where $F(L)$ is the Fisher information of estimating $L$ for the SPADE or direct imaging approach at value $L$ of the separation, the first inequality holds by definition since the worse-case error is no smaller than the average error, the second inequality is due to the Bayesian Cramer-Rao bound \cite{trees2007bayesian,gill1995applications,schutzenberger1957generalization,van2004detection}, which is valid for any estimator.
Since the first in equality holds for any $p(L)$, to get a better bound for the worse-case error  $\sup_L\text{MSE} (L)$, they minimize the $K[p(L)]$ by choosing $p(L)$.

The minimization is performed under the constraint that the prior probability is normalized, $\int dL\, p(L) = 1$. This is enforced using the standard Lagrange multiplier method for constrained optimization. For convenience, they define $q(L)$ such that $p(L) = q^2(L)$. 
With $q(L)$, the quantity $K[p(L)] = K[q(L)] = \int dL \,[ q^2(L) F(L) + 4 (q'(L))^2 ]$, where $q'(L) = dq(L)/dL$ (and similarly for other functions of $L$), is subject to the normalization constraint $\int dL\, q^2(L) = 1$. To enforce this constraint,  introduce a Lagrange multiplier $\lambda$ and define
\begin{equation}
\tilde{K}[q(L)] = \int dL \,\left[ q^2(L) F(L) + 4\,(q'(L))^2 - \lambda\,(q^2(L) - 1) \right].
\end{equation}
Varying $q(L) \to q(L) + \epsilon\,\eta(L)$ and keeping $O(\epsilon)$ terms gives
\begin{equation}
\delta \tilde{K}[q(L)]= \int dL \,\left[ 2q(L)\,F(L)\,\eta(L) + 8\,q'(L)\,\eta'(L) - 2\lambda\,q(L)\,\eta(L) \right].
\end{equation}
Integrating the $q'(L)\eta'(L)$ term by parts and assuming vanishing boundary terms,
\begin{equation}
\delta \tilde{K}[q(L)] = \int dL \,\left[ 2q(L)\,F(L) - 2\lambda\,q(L) - 8\,q''(L) \right] \eta(L), \quad q''(L)=\frac{d^2 q(L)}{dL^2}.
\end{equation}
Since $\eta(L)$ is arbitrary, the condition $\delta\tilde{K}[q(L)]=0$ yields
\begin{equation}\label{SI_eq:q_L}
\lambda q(L)=-4\frac{d^2q(L)}{dL^2}+F(L)q(L)
\end{equation}
which forms an eigenvalue problem, whose solution yields the Lagrange multiplier $\lambda$ and the optimal $q(L)$—and thus the optimal prior $p(L)$. If multiple solutions exist, each solution might yield a looser upper bound for $K[p(L)]$ or, equivalently, a looser lower bound for the worst-case error $\sup_L \text{MSE}(L)$.

In the problem of resolving two point sources of equal strength, they determine $p(L)$ by solving Eq.~\ref{SI_eq:q_L} with the corresponding Fisher information $F(L)$. They find that the SPADE method (which using non-Gaussian measurement) has
\begin{equation}\label{SI_eq:K_p1}
K[p(L)]\leq \frac{N}{4\sigma^2}
\end{equation}
whereas for the direct imaging method, it is possible to find $p(L)$ such that
\begin{equation}\label{SI_eq:K_p2}
K[p(L)]\leq \frac{3\sqrt{N}}{\sqrt{2}\sigma^2}
\end{equation}
where $N$ is the number of photons detected on the image plane. So, they show a scaling difference in terms of $N$ for direct imaging and the SPADE method. And it is in this sense that the SPADE method still outperform direct imaging even if we include the biased estimator. The analytical results obtained in their work are also validated by numerical simulations, confirming the predicted scaling difference in $N$ between direct imaging and superresolution. Note that the bound for the worst-case error using the Bayesian Cramér–Rao bound is still based on the Fisher information. Eqs. \ref{SI_eq:K_p1} and \ref{SI_eq:K_p2} are determined by the behavior of the Fisher information $F(L)$ when solving Eq. \ref{SI_eq:q_L}, as detailed in Ref.~\cite{tsang2018conservative}.

\subsection{Non-Gaussian measurement case}

Once deriving the bound of Fisher information $F(L)$ for resolving two point source using any Gaussian measurement as in Eq. \ref{SI_eq:FI_L}, $F_{LL}\leq k{NL^2}/{\sigma^4}+N\epsilon^2O(L^3)$, $k=\epsilon^2(\sqrt{3}+1)^2/4$, we can easily follow the approach in Ref. \cite{tsang2018conservative} to show any Gaussian measurement also has the scaling $1/\sqrt{N}$. We now give the detailed proof. In the limit $L \rightarrow 0$, substituting the leading-order term of the upper bound on $F_{LL}$ into Eq.~\ref{SI_eq:q_L} yields $q(L)$ and $\lambda$ as
\begin{equation}
q(L)=\left(\frac{2}{\pi}\right)^{1/4}\frac{L}{\omega^{3/2}}\exp\left(-\frac{L^2}{4\omega^2}\right),\quad\omega=\frac{\sigma}{(kN)^{1/4}},\quad \lambda=6/\omega^2,
\end{equation}
Using the prior distribution solved here $p(L)=q^2(L)$, and plug into Eq. \ref{SI_eq:K_p}, we can find
\begin{equation}\label{SI_eq:K_pL}
K[p(L)]\leq\frac{6\sqrt{kN}}{\sigma^2},
\end{equation}
which shows that for any Gaussian measurement, the scaling for worse-case error  $\sup_L\text{MSE} (L)$ scales as $\sqrt{N}$ which is comparable to the direct imaging approach. 
We emphasize that the bound in Eq. \ref{SI_eq:K_pL} applies to any Gaussian measurement because it follows directly from the Fisher information bound in Eq. \ref{SI_eq:FI_L}, which holds for all Gaussian measurements.  Therefore, we close the loophole of potential advantage provided by biased estimators by showing that for any estimator the Gaussian measurement cannot achieve superresolution.   
}

\end{document}